\documentclass[10pt]{article}
\usepackage{color, amsmath, amssymb, amsbsy, amsthm, graphicx, bbm, amsfonts, bm,dsfont,enumitem,booktabs}
\usepackage{caption}   %
\usepackage{booktabs}  %
\usepackage{setspace, comment, graphicx, latexsym,titlesec}
\usepackage{multirow}
\usepackage{array}
\usepackage[table]{xcolor}

\usepackage[top=1in, bottom=1in, left=1in, right=1in]{geometry}
\usepackage[flushleft]{threeparttable}
\usepackage[colorlinks,allcolors=blue]{hyperref}
\usepackage[square,numbers]{natbib}  
\usepackage{titlesec}
\allowdisplaybreaks
\usepackage[ruled,vlined,linesnumbered]{algorithm2e}
\usepackage{wrapfig,lipsum,booktabs}
\newtheoremstyle{theoremstyle} 
{\topsep} 
{\topsep} 
{\itshape} 
{} 
{} 
{} 
{.5em} 
{\color{black}\ifthenelse{\equal{#3}{}}{{\bfseries #1 #2}}{{\bfseries #1 #2 (#3)}}}
\newtheoremstyle{examplestyle}
{\topsep} 
{\topsep} 
{} 
{} 
{} 
{} 
{.5em} 
{\color{black}\ifthenelse{\equal{#3}{}}{{\bfseries #1 #2}}{{\bfseries #1 #2 (#3)}}}
\theoremstyle{theoremstyle}\newtheorem{theorem}{Theorem}
\theoremstyle{theoremstyle}     
\theoremstyle{theoremstyle}\newtheorem{lemma}{Lemma}  
\theoremstyle{theoremstyle}        
\theoremstyle{theoremstyle}
\theoremstyle{theoremstyle}\newtheorem{assumption}{Assumption}
\theoremstyle{theoremstyle}

\theoremstyle{examplestyle}

\theoremstyle{examplestyle}\newtheorem{remark}{Remark}

\def \hat{\widehat}
\def \tilde{\widetilde}

\begin{document}

\title{\Large HERO: Improving the Reliability and Sensitivity of Generative Model Evaluation Using Historical Data}

\title{\Large HERO: Improving the Reliability and Sensitivity of Generative Model Evaluation Using Historical Data}

\renewcommand{\thefootnote}{}
\author{
Xinrui Ruan\textsuperscript{1,2,*} \and
Zhenyu Zhao\textsuperscript{2,*} \and
Waverly Wei\textsuperscript{3} \and
Yueshan Zhang\textsuperscript{4} \and
Zeyu Zheng\textsuperscript{5} \and
Sui Huang\textsuperscript{2}\and
Jingshen Wang\textsuperscript{1,\textdagger}
}
\renewcommand{\thefootnote}{\arabic{footnote}}

\date{
{\small
\textsuperscript{1}Division of Biostatistics, University of California, Berkeley\\
\textsuperscript{2}Roblox Corporation\\
\textsuperscript{3}Department of Data Sciences and Operations, University of Southern California\\
\textsuperscript{4}School of Mathematical Sciences, Nankai University\\
\textsuperscript{5}Department of Industrial Engineering and Operations Research, University of California, Berkeley
}
}

\maketitle

\let\thefootnote\relax
\footnotetext{
\begin{tabular}{@{}l@{\hspace{0.3em}}l@{}}
\textsuperscript{*} & Xinrui Ruan and Zhenyu Zhao contributed equally. This work was done while Xinrui Ruan was an intern at Roblox Corporation.\\
\textsuperscript{\textdagger} & Correspondence: jingshenwang@berkeley.edu.
\end{tabular}
}
\let\thefootnote\arabic

\begin{abstract}
\noindent Reliable generative AI models critically rely on expert human annotations to evaluate output quality, yet these ``gold'' labels are expensive to collect and limited in quantity. Organizations thus often turn to collecting vast but noisy ``silver'' labels from crowdsourced workers or vendor annotators as proxies for gold labels. Because gold remains the evaluation target, naively aggregating noisy silver labels may introduce bias, and estimators built on sparsely observed gold labels may have high variance to resolve the model performance gaps that guide practical decisions. 
Model evaluation has become an ongoing operational practice rather than a one-time exercise, with evaluation rounds repeating across model versions, releases, and content domains. A natural question is whether the previous historical evaluation data can be used to improve each new round of evaluation.  We introduce HERO (History Enhanced RObust model evaluation), a novel framework that uses historical data to suppress bias (improve reliability) and reduce variance (improve sensitivity) in model performance evaluation. HERO calibrates silver labelers' performance learned from historical gold annotations, and stabilizes the resulting estimator by anchoring it to covariate information measured with high precision in the historical data.  HERO can be broadly applied across multiple common evaluation tasks, and remains valid when only a subset of historical labelers appears in the current round. We establish conditions under which the bias and variance reductions hold, showcase HERO's performance in simulation studies, and demonstrate its effectiveness on real-world model evaluation benchmarking datasets.

\smallskip\noindent{\it Keywords:} {Historical data; Noisy labels; Bias reduction, Control variates.}
\end{abstract}

\thispagestyle{empty}
\clearpage

\thispagestyle{plain}\setcounter{page}{1}

\doublespacing

\section{Introduction}\label{sec:intro}

\subsection{Motivation and Contribution}\label{subsec:motivation}
Human evaluation is at the heart of developing and deploying reliable generative AI models \citep{chiang2024chatbot}. Many organizations rely heavily on human annotations to evaluate model output quality, assess safety, calibrate moderation systems, and compare candidate model versions before release \citep{snow2008cheap,northcutt2021confident,frenay2013classification}. Small differences in estimated model quality, even on the order of a few percentage points in win rate or defect rate, determine which model is shipped to hundreds of millions of users \citep{weidinger2025toward,stein2023exposing}. 

As widely documented in natural language processing and related fields, a key challenge within an annotation task is that expert-annotated ``gold" labels (a.k.a. ground truth labels) are expensive and scarce \citep{snow2008cheap}, so organizations often instead rely on judgments from a pool of less specialized ``silver" annotators whose labeling reliability varies. The so-called \textit{gold} labels may be produced by a single expert annotator or by consensus among multiple experts within the organization, and they serve as the best available reference standard. In contrast, large-scale evaluations typically rely on \emph{silver} labels collected from broader, less specialized sources, including crowd or vendor annotators and automated judges such as LLMs \citep{gu2024survey} or VLMs \citep{chen2024mllm}. Throughout this paper, we view silver labels as the primary operational labels used in evaluation, and gold labels as a more accurate but limited supervision source used to assess and calibrate them. Because silver annotators differ in reliability, simply aggregating their labels without adjustment can reduce the reliability of evaluation and introduce bias in the estimated model performance.

Many methods have been developed to account for annotator variability \citep{DawidSkene1979,snow2008cheap,whitehill2009whose,raykar2010learning}, but most operate within a single evaluation round and do not fully leverage the unique structure of modern model evaluation settings. We note that two novel features distinguish modern AI model evaluations from classical annotation tasks. \textit{First}, organizations that evaluate generative models accumulate large-scale historical annotation data across prior evaluation rounds. These datasets often include repeated observations of annotator behavior and subsets of instances with gold labels, which together reveal patterns in labeler reliability. \textit{Second}, the historical and current model evaluation tasks may share content that transfers across rounds, such as prompt features.  Generation prompts often persist across rounds because organizations evaluate successive model versions on a stable prompt distribution drawn from real user traffic or curated benchmarks \citep{chiang2024chatbot}, so the same prompt features that appeared in earlier rounds also characterize the current evaluation set. With these two features, in principle, labeler reliability learned from historical gold labels can calibrate current annotations, and the shared generation prompts available in a large historical sample can stabilize current model evaluation estimates. We defer a more complete comparison with existing literature to Appendix F.

To fully exploit historical data for reliable model evaluation, we propose HERO (\textbf{H}istory \textbf{E}nhanced \textbf{R}\textbf{O}bust model evaluation), a framework that adjusts noisy annotation-based estimators with historical evaluation data to reduce both bias and variance in model performance estimates. The key contributions of our work include:
\vspace{-0.2cm}
\begin{itemize}[left=0.4cm]
\setlength\itemsep{-0.2em}
\item We elucidate two unique features that distinguish modern AI evaluation from classical annotation literature (Figure 2), namely the accumulation of historical annotation data with gold-labeled subsets, and prompt feature transfer from earlier rounds to the current evaluation round.
\item We identify bias sources that arise in this evaluation setting through real-world benchmarking data, isolating heterogeneous silver labeler reliability and labeler pool shift between rounds (Figure 1).
\item We propose a theoretically grounded yet simple method, HERO, that combines bias reduction from labeler reliability estimation with variance reduction from control variates  (Section \ref{sec:method}). Our control variate construction is customized at the score level, departing from the existing literature that operates on raw outcomes.
\item We show that the proposed method applies across multiple common model evaluation use cases (Table 1) and validate our method with comprehensive empirical evidence (Sections \ref{sec:simulation} and \ref{sec:real world}).
\end{itemize}

\subsection{Model Evaluation Target Parameters and Tasks}\label{sec:target_params}

We organize common model evaluation scenarios into a unified observational structure. Consider $n$ instances indexed by $i = 1, \ldots, n$. For each instance, $Y_i^*$ denotes the \textit{operational ground-truth gold label}, the best available adjudicated assessment of the generated output, and $Z_i$ denotes the noisy \textit{silver labels} collected from less specialized annotators or automated AI systems as proxies for $Y_i^*$. The target parameter is a function of $Y_i^*$, and the task-specific outcome $Y_i$ is constructed from $Y_i^*$ according to the scenario. Table~\ref{tab:use_cases} summarizes four evaluation scenarios, listing the outcome variable $Y_i$, the corresponding target parameter, and the available covariates for each. We summarize each case in Table 1 below and defer its detailed explanation to the appendix. 

\begin{table}[h]
\footnotesize 
\resizebox{\textwidth}{!}{%
\begin{tabular}{>{\centering\arraybackslash}p{3.4cm} 
                |>{\centering\arraybackslash}p{4cm} 
                |>{\centering\arraybackslash}p{4cm}}
\hline
\rowcolor{gray!20}
\textbf{Use Case} & \textbf{Outcome $Y_i$} & \textbf{Target Parameter}  \\
\hline
{1. Single Model Evaluation}
& Gold label for Model A: $Y_i = Y^*_{i,\texttt{A}}$ 
& Average performance: \newline $\mu_{\texttt{A}} = \mathbb{E}[Y_i]$ \\ \hline

{2. Model Comparative Evaluation} 
& Difference in gold labels: $Y_i = Y^*_{i,\texttt{A}} - Y^*_{i,\texttt{B}}$
& Average performance difference:  $\alpha = \mathbb{E}[Y_i]$ \\ \hline

{3. Side-by-Side Model Comparison} 
& Binary preference indicator: $Y_i = \mathds{1}(Y^*_{i,\texttt{A}} \succ Y^*_{i,\texttt{B}})$
& Win rate for Model A:  $\pi_{\texttt{A}\succ\texttt{B}} = \mathbb{E}[Y_i]$ \\ \hline

{4. Safety Moderation Calibration} 
& True positive flag: $Y_i = Y_i^* Y_i^{\texttt{AI}}$ 
& Precision $= {\mathbb{E}[Y_i]}/{\mathbb{E}[Y_i^{\texttt{AI}}]}$; \newline Recall $= {\mathbb{E}[Y_i]}/{\mathbb{E}[Y_i^*]}$ \\ \hline
\end{tabular}%
}
\caption{Summary of model evaluation use cases.\label{tab:use_cases} }\vspace{-0.5cm}
\end{table}

\begin{figure}[h]
    \centering
    \includegraphics[width=0.8\linewidth]{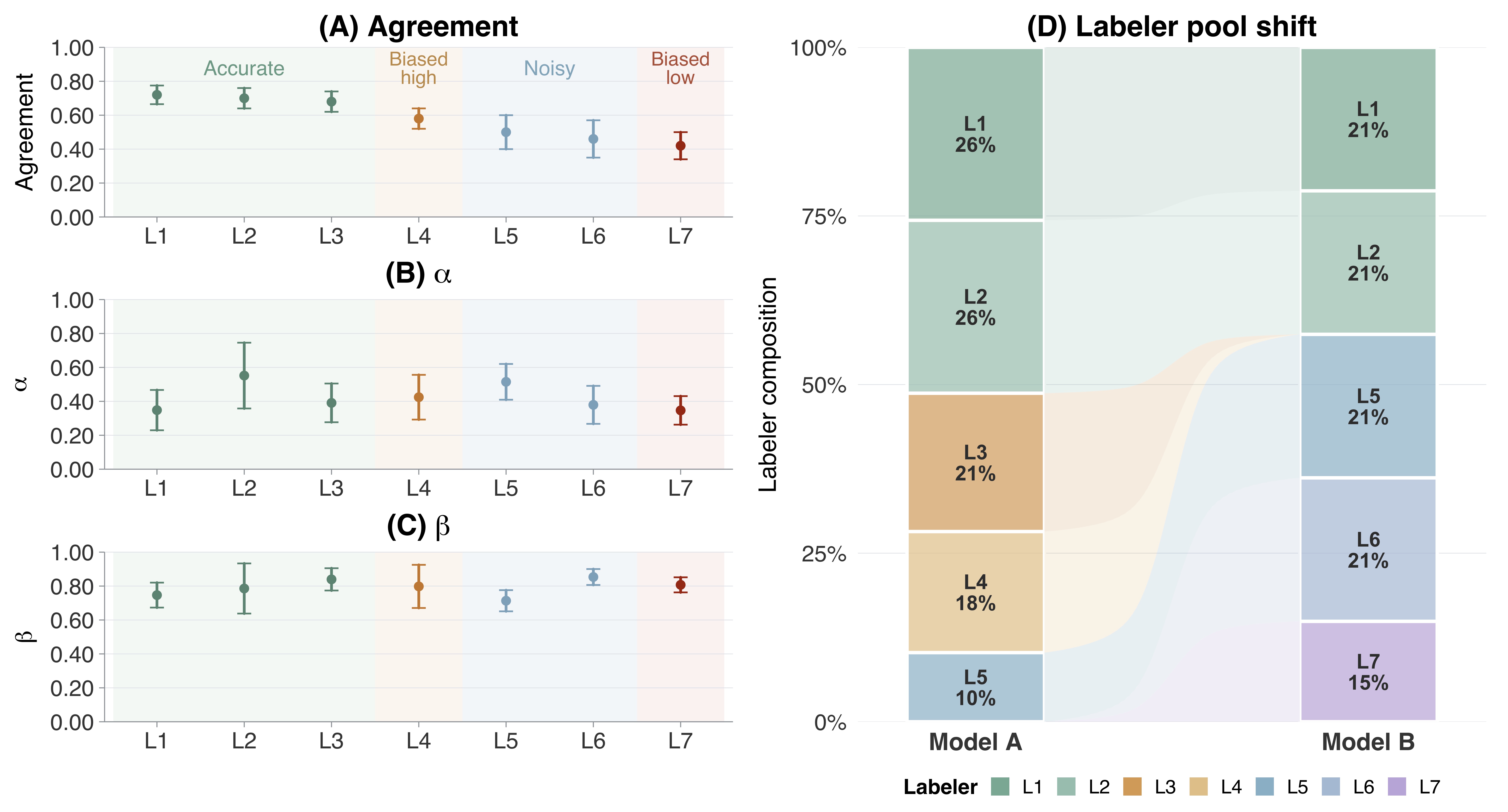}
    \caption{Panels (A)--(C) show heterogeneous silver labeler agreement rates, sensitivity $\alpha_j$, and specificity $\beta_j$, with 95\% confidence intervals in our benchmarking data (Appendix F, Asset detect detection data). Panel (D) shows the labeler pool shift between the two model evaluation rounds.} \vspace*{-0.2cm}
    \label{fig:bias_sources}
\end{figure}

\section{Bias and Variance in Model Evaluation}\label{sec:bias_variance}

Estimating the target parameter $\mathbb{E}[Y_i]$ from gold labels alone is rarely feasible in practice. This is because many evaluation tasks provide no gold labels at all. Even when a subset of instances is annotated by a gold labeler, the labeled fraction is typically small. The sample mean over this subset is unbiased, but its variance remains too large to resolve the differences in model performance that guide practical decisions. For these reasons, practitioners often construct a proxy for $Y_i$ by aggregating the available silver labels $Z_i$. 
Let $\hat{\mu}$ denote an estimator of $\mathbb{E}[Y_i]$ constructed from a silver-labeler-based proxy. To assess estimator quality, we decompose the mean-squared error (MSE) into bias and variance:
\begin{equation}
\underbrace{\mathbb{E}\!\Big[
  \big(\hat{\mu} - \mathbb{E}[Y_i]\big)^2
\Big]}_{\text{MSE relative to target}}
\;=\;
\underbrace{\big(\mathbb{E}[\hat{\mu}]
  - \mathbb{E}[Y_i]\big)^2}_{\text{Bias}^2}
\;+\;
\underbrace{\mathrm{Var}(\hat{\mu})}_{\text{Variance}}.
\label{eq:mse_decomp}
\end{equation}

Both components introduce distinct challenges. We start by examining the bias component through two sources that arise in generative model evaluation. The first source of bias is annotator heterogeneity. The inter-rater reliability literature consistently finds variation in annotator accuracy across domains from medical imaging
\citep{Warfield2004, Raykar2010} to natural language processing
\citep{Snow2008}, even under shared guidelines
\citep{Cohen1960, Fleiss1971, DawidSkene1979, Whitehill2009}. Differences in expertise, interpretation standards, and
decision thresholds cause some labelers to favor precision over
recall, while others show the opposite tendency. Aggregation
without calibration, such as the majority vote-based estimators, allows these systematic tendencies,
particularly from lower-quality labelers, to shift the estimate
of $\mathbb{E}[Y_i]$ away from the true target.

As shown in Figure 1 Panels A-C, the seven silver labelers in our benchmarking data span agreement rates from below 0.4 to above 0.8, and the sensitivity $\alpha_j$ and specificity $\beta_j$ panels reveal asymmetric error profiles behind that spread, with some labelers favoring positives, others favoring negatives, and a third group noisy on both axes. A majority-vote estimator weights these labelers equally and therefore inherits the mixture rather than the gold-labeler target.

A second source of bias arises when the composition of the
labeler pool changes between evaluation rounds. For example, in comparative
evaluation tasks (Use Case~2 in Table~\ref{tab:use_cases}),
organizations may evaluate a newer model against an earlier
version at different times. The labeler pools that annotate each
model may only partially overlap. A reliable annotator who
contributes a large fraction of Model~A labels may complete only
a small fraction of Model~B labels, and vice versa. When the
proportion of reliable annotators differs between the two rounds,
the observed label difference reflects both true differences in
model quality and shifts in the composition of the
labeler pool. An estimator that treats the two label sets as
exchangeable confounds these two sources of variation. Figure~1 Panel D illustrates this shift in our benchmarking data. Two labelers active in the Model~A round do not appear in the Model~B round, and the labelers active in both rounds contribute different proportions of labels across the two models.

The variance component of the MSE in
Equation~\eqref{eq:mse_decomp} remains a challenge even after
addressing the bias issues mentioned above. Even with a well-calibrated silver label based proxy, a small labeled set or high labeling outcome heterogeneity across instances can produce confidence intervals too wide to resolve the
performance gaps that guide practical decisions. A variance reduction technique that halves the estimator variance is equivalent to doubling the effective sample size, so that an evaluation previously requiring 1{,}000 labeled instances can reach the same precision with 500, cutting annotation cost and turnaround time proportionally.

\section{Method: History Enhanced RObust (HERO) Model Evaluation}\label{sec:method}

\begin{figure}[t]
    \centering
    \includegraphics[width=\linewidth]{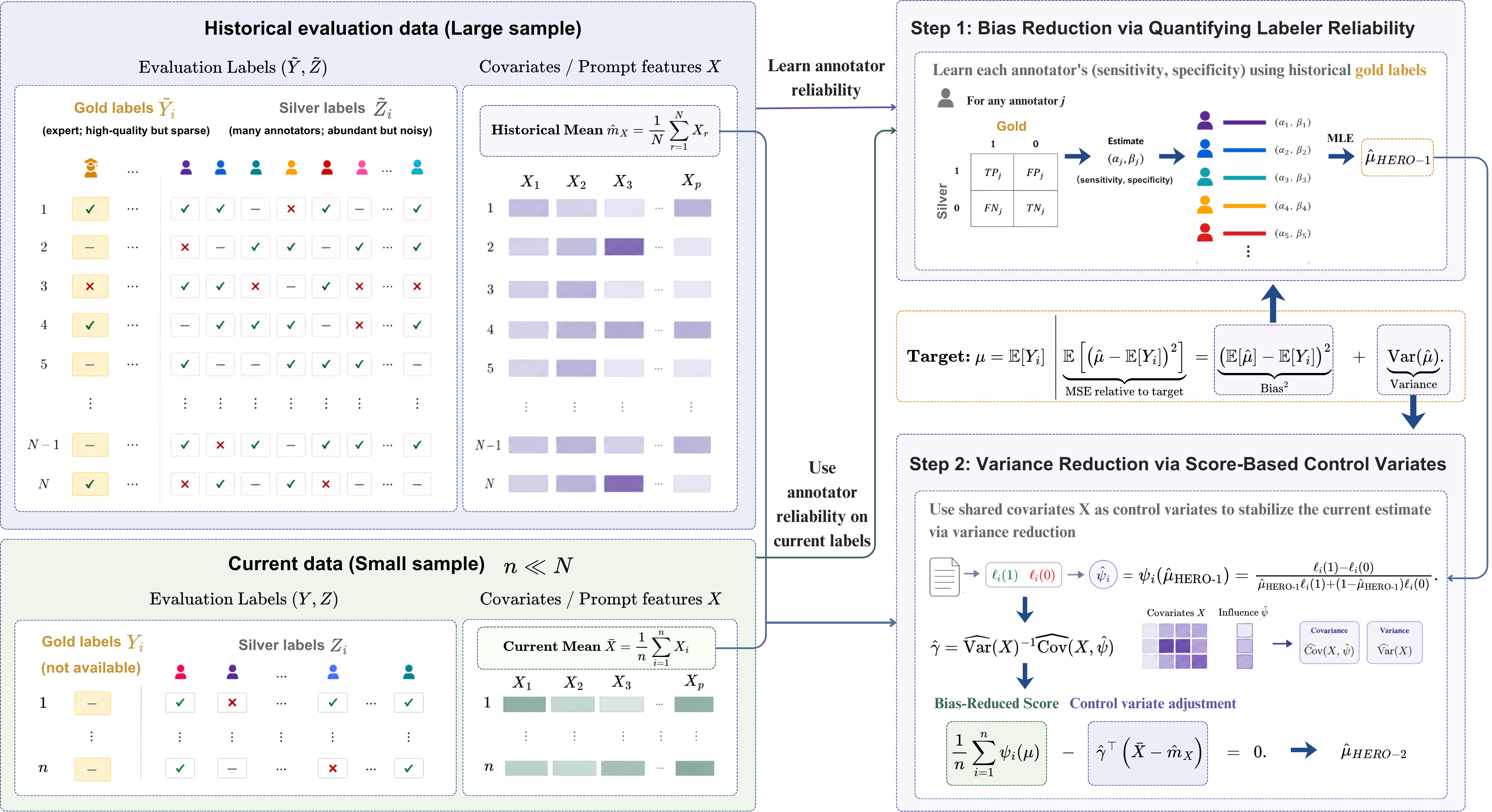}
\caption{
{Overview of the History-Enhanced Robust (HERO) evaluation framework.}
        HERO combines a large historical dataset $(\tilde{Y}_i, \tilde{Z}_i, \tilde{X}_i)_{i=1}^N$ with a small current sample $(Z_i, X_i)_{i=1}^n$ ($n \ll N$) to estimate the target parameter $\mu = \mathbb{E}[Y_i]$ with $Y_i$ unobserved on current evaluation data. 
         \vspace*{-5mm} 
    }
    \label{fig:hero_diagram}
\end{figure}

In this paper, we propose a unified framework to reduce both bias and variance for an \textit{accurate} (improved reliability) and \textit{efficient} (improved sensitivity) estimation of the target parameter $\mu$. Achieving both goals requires an external source of information beyond the only current silver labels. As organizations that evaluate generative models often maintain historical evaluation datasets from prior rounds, our framework leverages these historical data to construct an estimator of $\mu$ with reduced MSE. We term the proposed framework History Enhanced RObust (HERO) model evaluation and organize it as a two-step procedure that first reduces the bias and then controls the variance.

At a very high level, HERO works as follows. When we analyze current evaluation data with only silver labels, HERO first constructs calibrated gold proxies that weight each silver labeler by the reliability learned from historical data, yielding a bias-reduced estimator $\hat{\mu}_{\text{HERO-1}}$. HERO then
incorporates historical covariates (e.g., prompt features) as
control variates to further reduce the variance of
$\hat{\mu}_{\text{HERO-1}}$, yielding the final estimator
$\hat{\mu}_{\text{HERO-2}}$. This estimator satisfies:
\vspace{-0.2cm}
\begin{itemize}[left=0.4cm]
\setlength\itemsep{-0.25em}
    \item $\hat{\mu}_{\text{HERO-1}}$ and $\hat{\mu}_{\text{HERO-2}}$ both have reduced bias compared
    to the majority-vote-based estimator.
    \item The variance of $\hat{\mu}_{\text{HERO-2}}$ is
    reduced compared to $\hat{\mu}_{\text{HERO-1}}$ when shared covariates between the current and historical data are available.
\end{itemize}

\subsection{Historical and Current Evaluation Data}\label{subsec:notation}

To distinguish historical data from the current evaluation task, we introduce a consistent notation. We use a tilde to denote quantities associated with historical data. The historical dataset consists of $N$ instances collected from earlier evaluation rounds and is defined as follows:
\vspace{-0.25cm}
\begin{itemize}[left=0.4cm]
\setlength\itemsep{-0.2em}
 \item Noisy silver labels $\tilde{Z}_i$ from a pool of $m$ silver labelers.  Not all labelers annotate every instance.  The availability indicator $\tilde{A}_{ij} \in \{0,1\}$ records whether labeler~$j$ provided a label for
  instance~$i$.
  
 \item A set of instances together with their associated covariates $\{\tilde{X}_i\}_{i=1}^N$, such as prompt features, metadata, or contextual information. These prompts may \textit{reappear} in future evaluation tasks.

\item Gold labels $\tilde{Y}_i$ are collected on a subset of historical instances to audit silver labeler performance and identify which labelers are consistently accurate. These gold labels are obtained under a randomized routing mechanism with known selection probability given $\tilde{X}_i$.  The routing indicator $\tilde{R}_i \in \{0,1\}$ equals $1$ when instance~$i$ is sent to a gold labeler and $0$ otherwise.  Gold labels are observed only when $\tilde{R}_i = 1$.
\end{itemize}

Meanwhile, the current evaluation dataset contains $n$ instances, where $n \ll N$: 
\vspace{-0.25cm}
\begin{itemize}[left=0.4cm]
\setlength\itemsep{-0.2em}
\item A set of instances together with their associated covariates
  $\{X_i\}_{i=1}^{n}$, including prompt features, metadata, and contextual information.
\item No gold labels are observed in the current task.
\item Noisy silver labels $Z_i$ with availability indicators
  $A_{ij} \in \{0,1\}$.
\end{itemize}
\vspace{-0.3cm}

Throughout this section, all outcomes and labels are binary, so that $Y_i \in \{0,1\}$ and $\tilde{Y}_i \in \{0,1\}$.  The target parameter remains $\mu = \mathbb{E}[Y_i]$ as defined in Section~\ref{sec:target_params}, but gold label $Y_i$ is unobserved.

\noindent \textit{Scope of historical data}. Our framework allows two kinds of mismatch between current and historical data. First, the current annotator pool can differ, but it is a subset of the labelers who contributed to the historical data. Second, the historical task itself may differ from the current task, since HERO calibrates labeler reliability rather than item-level labels.

\subsection{HERO Step 1: Bias Reduction via Quantifying Labeler Reliability}\label{subsec:hero1}

To account for model evaluation bias introduced by silver labeler heterogeneity, historical data provide a natural source for revealing persistent patterns in labeler behavior. HERO Step 1 extracts these reliability patterns from the historical data and transfers them to the current evaluation task, where no gold labels are available. More concretely, we characterize each silver labeler $j$ by two parameters that describe intrinsic properties of that labeler's decision process,
\begin{align*}
\alpha_j &= P(Z_{ij}=1 \mid Y_i=1), \quad \beta_j  = P(Z_{ij}=0 \mid Y_i=0).
\end{align*}
Here $\alpha_j$ is the sensitivity and $\beta_j$ is the specificity. Because these parameters describe the labeler rather than the task, they are expected to transfer from the historical dataset to the current dataset even when the underlying evaluation content/task differs.

HERO estimates $(\alpha_j, \beta_j)$ for each labeler $j$ from the historical dataset
$\{(\tilde{X}_i, \tilde{R}_i\tilde{Y}_i, \tilde{Z}_i, \tilde{R}_i, \tilde{A}_{ij})\}_{i=1}^{N}$ using an EM algorithm that treats unobserved gold labels as latent variables, and use observed gold labelers as anchors. Given a candidate value $y \in \{0,1\}$ for the gold label, the likelihood of
the observed silver labels for historical instance $i$ is 
\begin{equation}
  \tilde{\ell}_i(y)
  = \prod_{j \,:\, \tilde{A}_{ij}=1} P(\tilde{Z}_{ij} \mid \tilde{Y}_i = y),
  \label{eq:hist_instance_lik}
\end{equation}
where $P(\tilde{Z}_{ij} \mid \tilde{Y}_i = 1)
= \alpha_j^{\,\tilde{Z}_{ij}}\,(1 - \alpha_j)^{1 - \tilde{Z}_{ij}}$
and $ P(\tilde{Z}_{ij} \mid \tilde{Y}_i = 0)= (1 - \beta_j)^{\tilde{Z}_{ij}}\,\beta_j^{\,1 - \tilde{Z}_{ij}}$. 
An instance without a gold label
($\tilde{R}_i = 0$) contributes the marginal
$\tilde{\mu}\,\tilde{\ell}_i(1) + (1-\tilde{\mu})\,\tilde{\ell}_i(0)$, where
$\tilde{\mu} = \mathbb{E}[\tilde{Y}]\neq \mathbb{E}[Y]$. The full historical log-likelihood is
\begin{equation}
  \tilde{\mathcal{L}}\bigl(\tilde{\mu},\,\{\alpha_j,\beta_j\}_{j=1}^{m}\bigr)
  = \sum_{i=1}^{N}\Bigl[
      \tilde{R}_i\,\log \tilde{\ell}_i(\tilde{Y}_i)
      + (1 - \tilde{R}_i)\,
        \log\bigl(\tilde{\mu}\,\tilde{\ell}_i(1)
        + (1 - \tilde{\mu})\,\tilde{\ell}_i(0)\bigr)
    \Bigr].
  \label{eq:hist_loglik}
\end{equation}

To stabilize the EM updates, HERO initializes the parameters using only the gold-labeled subset
$\mathcal{G}=\{i:\tilde{R}_i=1\}$. Specifically, we set
\[
\tilde{\mu}^{(0)}=|\mathcal{G}|^{-1}\sum_{i\in\mathcal{G}}\tilde{Y}_i,
\qquad
\alpha_j^{(0)}
=
\frac{\sum_{i\in\mathcal{G}}\tilde{A}_{ij}\tilde{Y}_i\tilde{Z}_{ij}}
     {\sum_{i\in\mathcal{G}}\tilde{A}_{ij}\tilde{Y}_i},
\qquad
\beta_j^{(0)}
=
\frac{\sum_{i\in\mathcal{G}}\tilde{A}_{ij}(1-\tilde{Y}_i)(1-\tilde{Z}_{ij})}
     {\sum_{i\in\mathcal{G}}\tilde{A}_{ij}(1-\tilde{Y}_i)}.
\]
HERO then maximizes
$\tilde{\mathcal{L}}$ via the EM algorithm (see Appendix Algorithm~1,
Stage~A), producing estimates $(\hat{\alpha}_j,\hat{\beta}_j)$ that are
treated as fixed when estimating $\mu$ on the current data.

The current-data likelihood $l_i(y)$ follows the same structure as defined in \eqref{eq:hist_instance_lik}, with $(Z_{ij}, A_{ij})$
replacing $(\tilde{Z}_{ij}, \tilde{A}_{ij})$ and the estimates
$(\hat{\alpha}_j, \hat{\beta}_j)$ replacing the unknown parameters. Because no
gold labels are observed, every instance contributes through the marginal form.
The HERO Step~1 estimator maximizes the log-likelihood over $\mu \in (0,1)$,
\begin{equation}
  \hat{\mu}_{\textsc{hero-1}}
  = \arg\max_{\mu \in (0,1)}\;
    \sum_{i=1}^{n}
    \log\bigl[\mu\,\ell_i(1) + (1 - \mu)\,\ell_i(0)\bigr].
  \label{eq:hero1}
\end{equation}
The first-order condition is
\begin{equation}
  \sum_{i=1}^n \psi_i(\mu):=\sum_{i=1}^{n}
  \frac{\ell_i(1) - \ell_i(0)}
       {\mu\,\ell_i(1) + (1 - \mu)\,\ell_i(0)}
  = 0,
  \label{eq:score_eq}
\end{equation}
which can be solved by any standard one-dimensional optimizer because $\mu$ is
scalar. Here, $\psi_i(\mu)$ denotes the individual score contribution from instance $i$.

\begin{remark}[Comparison with majority vote]
\label{rem:bias_reduction}
The instance-level ratio $\ell_i(1)/\ell_i(0)$ controls how each instance
influences $\hat{\mu}_{\textsc{hero-1}}$. When most high-sensitivity labelers
agree on a positive label, this ratio is large, pulling the estimate toward the
value implied by accurate annotators. Conversely, agreement among labelers with
low specificity on a negative label carries less weight. When all labelers share
identical accuracy, $\ell_i(1)/\ell_i(0)$ reduces to a monotone function of the
majority-vote count, and the estimator recovers a majority-vote-type solution as
a special case.
\end{remark}

\subsection{HERO Step 2: Variance Reduction via Score-Based Control Variates}\label{subsec:hero2}

On top of bias reduction, HERO can further reduce estimator variance by leveraging auxiliary covariates that are shared between the historical and current datasets. Prompt features are a natural shared covariate, since organizations typically evaluate successive model versions on a stable prompt distribution drawn from real user traffic or curated benchmarks, and the same prompt features that appeared in earlier rounds also characterize the current evaluation set. 

The key observation is that the historical dataset ($N \gg n$), being much larger than the current evaluation sample, provides a substantially more precise estimate of the covariate mean. This historical mean can therefore serve as a stable reference in a control-variate adjustment without changing the target parameter. 

Yet, the classical control variate adjustment does not transfer directly in our task, because the Step 1 estimator is a calibrated functional of EM posteriors rather than a sample mean. We thus propose to apply the control variate adjustment to the estimating equation that defines the Step 1 HERO estimator, with a shared covariate function whose population mean is supplied by the historical sample.  Let $X_i$ denote a covariate observed in the current evaluation task, and let $
\widehat{m}_X = \frac{1}{N} \sum_{r=1}^N \widetilde{X}_r$
denote the sample mean of the same covariate in the historical data. We construct a control-variate estimator based on the score equation in \eqref{eq:score_eq}. After obtaining the pilot HERO estimator $\hat{\mu}_{\textsc{hero-1}}$ from Step 1, define the estimated score contribution for instance $i$ as
\begin{equation}
  \hat{\psi}_i
  :=
  \psi_i(\hat{\mu}_{\textsc{hero-1}})
  =
  \frac{\ell_i(1)-\ell_i(0)}
       {\hat{\mu}_{\textsc{hero-1}}\ell_i(1)+\bigl(1-\hat{\mu}_{\textsc{hero-1}}\bigr)\ell_i(0)}.
  \label{eq:hero1-pilot-score}
\end{equation}
We then estimate the control-variate coefficient by regressing the pilot score $\hat \psi_i$ on $X_i$: 
\begin{equation}
  \hat{\gamma}
  =
  \widehat{\mathrm{Var}}(X)^{-1}
  \widehat{\mathrm{Cov}}(X,\hat{\psi}),
  \label{eq:hero2-gamma}
\end{equation}
where $\widehat{\mathrm{Var}}(X) = \tfrac{1}{n}\sum_{i=1}^n (X_i-\bar X)(X_i-\bar X)^\top$ and $\widehat{\mathrm{Cov}}(X,\hat{\psi}) = \tfrac{1}{n}\sum_{i=1}^n (X_i-\bar X)\hat{\psi}_i$.

Whenever the covariate mean is shared between the historical and current datasets, the difference $\bar X-\widehat{m}_X$ is asymptotically mean zero. Therefore, the adjustment term $\hat{\gamma}^\top(\bar X-\widehat{m}_X)$
can be used to reduce variance without changing the asymptotic expectation of the estimator. We define the control-variate adjusted estimator $\hat{\mu}_{\textsc{hero-2}}$ as the solution to
\begin{equation}
  \frac{1}{n}\sum_{i=1}^n \psi_i(\mu)
  -
  \hat{\gamma}^\top(\bar X-\widehat{m}_X)
  = 0.
  \label{eq:hero2-eq}
\end{equation}
As in Step 1, this equation can also be solved using any standard one-dimensional optimizer.

This adjustment reduces variance whenever the covariate $X_i$ is correlated with the score contribution $\psi_i$, which quantifies how each observation influences the estimator. If $X_i$ carries little information about the outcome, then $\hat{\gamma}$ is close to zero, and the estimator effectively reduces to $\hat{\mu}_{\textsc{hero-1}}$. Consequently, HERO Step 2 improves efficiency when informative auxiliary covariates are available, while leaving the estimator essentially unchanged when they are not.

\subsection{Theoretical Guarantees}\label{subsec:theory}

On top of being practical and easy to implement, HERO enjoys theoretical guarantees along three dimensions: reliability (Theorem 1), sensitivity  (Theorem 2), and valid statistical inference  (Theorem 3). Reliability means that HERO accurately targets the gold-label estimand, rather than a proxy estimand distorted by heterogeneous silver labelers. Sensitivity means that HERO reduces variance, which may improve the detection of small but practically meaningful model-performance differences. Valid statistical inference means that the estimator admits an asymptotic distribution, providing the basis for credible confidence intervals. Due to space limit, we defer the formal theorem statements and proofs to Appendix C.

\begin{figure}[t]
    \centering
    \includegraphics[width=\linewidth]{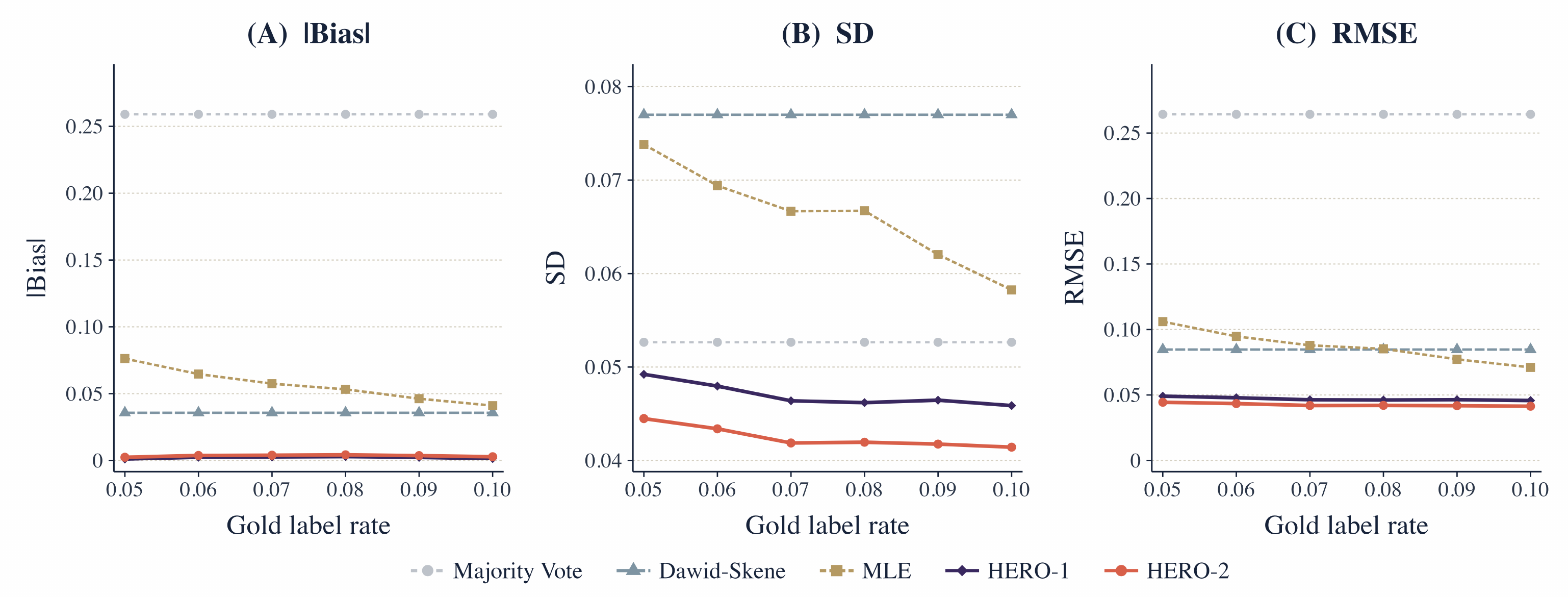}
    \caption{Experiment 1: (A) absolute bias, (B) Monte Carlo standard deviation, and (C) RMSE. }
    \label{fig:sim-gold-large}
\end{figure}

\section{Simulation Study}\label{sec:simulation}

In this section, we evaluate the performance of HERO method
against several benchmark methods. We summarize our takeaway as follows: First, HERO-1 remains accurate even when the historical
gold-label rate is small, whereas the sample-proportion calibration baseline is unstable when only a few
gold-routed examples are available per labeler. This shows HERO-1 can
extract useful reliability information from the full historical label matrix, not only from the gold-labeled
subset. Second, HERO-2 further improves sensitivity by reducing variance and
RMSE relative to HERO-1 when the shared covariate is informative.  Third, under increasing labeler heterogeneity, HERO-1 and HERO-2 remain
substantially more robust than baseline methods, confirming
that historical reliability calibration and score-level control variates reduce estimation errors. 
Due to the page limit, we defer the data generating process details to Appendix. In what follows, we introduce the methods in comparison, evaluation metrics, and simualtion study results.

\textbf{Methods in comparison.}
We compare five estimators: (1) Majority vote. 
(2) Dawid--Skene method with current data only:
The current-only Dawid--Skene baseline fits the standard latent-class model using the current silver labels alone, jointly estimating $\mu$ and the labeler-specific parameters $\{(\alpha_j,\beta_j)\}_{j=1}^{J}$. 
(3) Historical sample-proportion calibration. We estimate each labeler's sensitivity and specificity using only the gold-labeled subset of the historical data. With Laplace smoothing.
We then maximize the current-data likelihood over $\mu$ with these calibration estimates fixed. 
(4) HERO-1 only use historical data for bias reduction. 
(5) HERO-2 applies the score-anchored control-variate adjustment to HERO-1.  In this simulation, the population mean $\mathbb{E}[X]=0$ is known, but we use $\bar X_{\mathrm{hist}}$ to mirror the practical setting in which HERO relies on a high-precision historical estimate rather than oracle knowledge. Appendix Table~\ref{tab:sim-params} summarizes the simulation settings. The target $\mu_0$ is computed from the data-generating model, $\mu_0=\mathbb{E}_{X}\{\sigma(\beta_0+\beta_1X)\}$, and is fixed within each experimental setting.

\textbf{Experiment 1: Historical gold-label rate.}
The first experiment varies the amount of historical gold supervision.  In Setting~(ii), we use a larger labeler pool and a stronger covariate signal: $n=300$, $N=800$, $J=10$, and $\beta_1=3$. We vary $\rho\in\{0.05,0.06,\ldots,0.10\}$. \textbf{Experiment 2: Labeler heterogeneity.} The second experiment studies robustness to heterogeneity in silver-labeler accuracy. We use the same baseline configuration as Experiment~1 Setting~(ii), with $n=300$, $N=800$, $J=10$, $\beta_0=-0.2$, and $\beta_1=3$. We hold the mean reliability fixed at $\mu_\alpha=0.85$ and $\mu_\beta=0.55$, and vary only the Beta concentration parameter $k$. Under the parametrization
$
    \alpha_j\sim \mathrm{Beta}\{k\mu_\alpha,k(1-\mu_\alpha)\},
    \
    \beta_j\sim \mathrm{Beta}\{k\mu_\beta,k(1-\mu_\beta)\},
$
the means satisfy $\mathbb{E}[\alpha_j]=\mu_\alpha$ and $\mathbb{E}[\beta_j]=\mu_\beta$ for every $k$, while
$
    \mathrm{Var}(\alpha_j)=\frac{\mu_\alpha(1-\mu_\alpha)}{k+1},
    \
    \mathrm{Var}(\beta_j)=\frac{\mu_\beta(1-\mu_\beta)}{k+1}.
$
Thus, decreasing $k$ increases labeler heterogeneity without changing the average sensitivity or specificity. We consider the five heterogeneity levels in Appendix Table~\ref{tab:sim-params}. For each level, we evaluate historical gold-label rates $\rho\in\{0.05,0.10\}$.

\begin{figure}
    \centering
    \includegraphics[width=\linewidth]{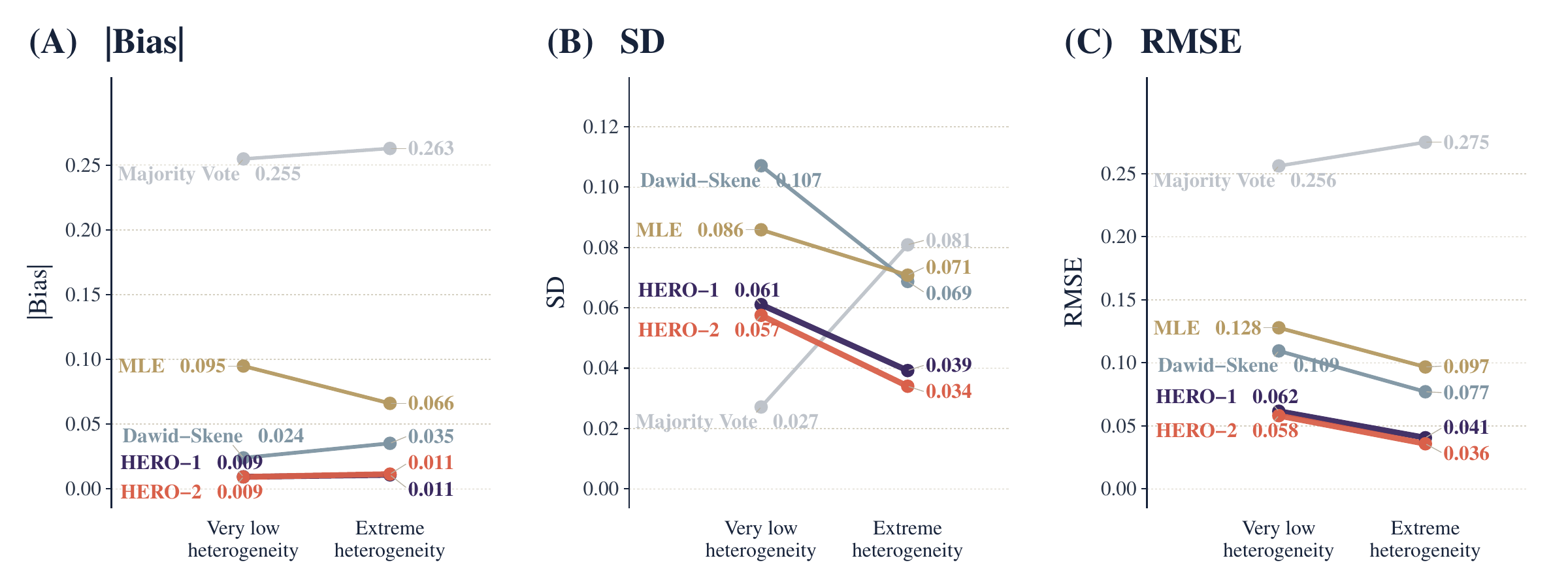}
    \caption{Experiment 2: Comparison of  (A) absolute bias, (B) Monte Carlo standard deviation, and (C) RMSE as the Beta concentration $k$ decreases, increasing the dispersion of labeler-specific sensitivities and specificities }
    \label{fig:sim-heterogeneity}
\end{figure}

\textbf{Evaluation metrics.} For each configuration, we run $B=200$ Monte Carlo replications. We report absolute bias, Monte Carlo standard deviation, and root mean squared error (RMSE). All EM algorithms use convergence tolerance $10^{-6}$ and a maximum of 200 iterations.

\textbf{Simulation study results}.  Figure ~\ref{fig:sim-gold-large} shows that Majority vote has nearly constant bias as $\rho$ varies, since it does not use historical gold labels. The current-only Dawid--Skene baseline also does not improve with $\rho$ and remains limited by the identifiability and finite-sample challenges of estimating all labeler parameters from the small current round alone. The historical sample-proportion baseline improves as $\rho$ increases, but it is unstable when the number of gold-routed examples per labeler is small. In contrast, HERO-1 is substantially less sensitive to low gold-label rates because its anchored EM step uses both the gold-routed and ungolded historical instances. HERO-2 further reduces SD and RMSE by exploiting the correlation between $X_i$ and the score contribution, with larger variance reduction in the stronger-signal setting. We provide additional results in Appendix Figure~\ref{fig:sim-gold-small}.

Figure~\ref{fig:sim-heterogeneity} shows that majority vote becomes increasingly unreliable as labeler heterogeneity grows, because it assigns equal weight to labelers with different error profiles. The current-only Dawid--Skene baseline partially adapts to heterogeneous reliability but remains unstable because the current sample is small and unanchored by gold labels. Historical sample-proportion calibration improves when enough gold labels are available, but it can still have high variance when each labeler receives few gold-routed examples. HERO-1 maintains low bias across heterogeneity levels by pooling information from the full historical label matrix, and HERO-2 achieves the lowest RMSE by combining this calibration with score-level variance reduction.

\section{Results in Real-World Benchmarking Data}\label{sec:real world}

In this section, we evaluate the proposed HERO method on two real-world datasets provided by a technology company that routinely conducts large-scale generative model evaluations. The datasets were collected to assess the performance of 3D asset generation models and contain both historical and current evaluation data. The case studies cover two representative tasks: generation defect evaluation and safety evaluation. In each task, every prompt--asset pair is evaluated by a small number of randomly selected silver labelers according to the task-specific evaluation objective, and a random subset of pairs is additionally evaluated by a gold labeler, who is a well-trained expert. To assess statistical performance and robustness, we use a resampling-based approach to report absolute bias, SD, and RMSE for the five estimators described in Section~\ref{sec:simulation}. Additional details on the real-world datasets are provided in Appendix~E.

\begin{table}[t]
\centering
\begin{tabular}{lcccc}
\toprule
Estimator & Mean estimate & $|\text{Bias}|$ & SD & RMSE \\
\midrule
Majority Vote        & $0.110$ & $0.130$ & $\mathbf{0.019}$ & $0.131$ \\
Dawid--Skene         & $0.120$ & $0.120$ & $0.027$ & $0.123$ \\
MLE (hist.\ sample mean) & $0.069$ & $0.171$ & $0.121$ & $0.209$ \\
HERO-1               & $0.302$ & $0.062$ & $0.070$ & $0.093$ \\
HERO-2               & $0.288$ & $\mathbf{0.048}$ & $0.064$ & $\mathbf{0.080}$ \\
\bottomrule
\end{tabular}
\vspace{0.2cm}
\caption{Case study I (image defect detection), gold-label rate $0.10$.}
\label{tab:defect_gold10}
\end{table}

\begin{figure}
    \centering
    \includegraphics[width=\linewidth]{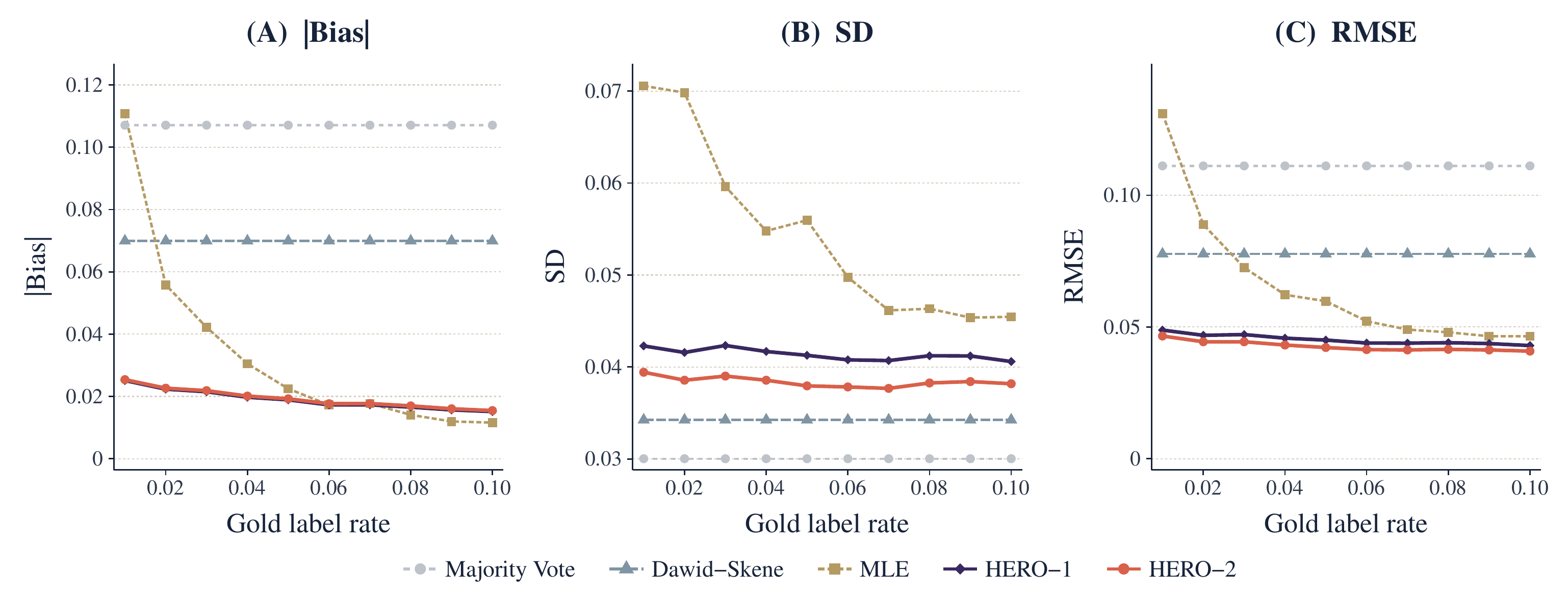}
    \caption{Case study II: Results in model safety evaluation dataset.}
    \label{fig:case_safety_tfidf}
\end{figure}

Table \ref{tab:defect_gold10} and Figure~\ref{fig:case_safety_tfidf} show that HERO method reduces absolute bias and RMSE relative to the baseline methods. This improvement is driven by the presence of labeler heterogeneity and labeler-pool shift between the historical and current datasets, as illustrated in Figure~\ref{fig:bias_sources}. HERO is designed to exploit precisely these sources of information by estimating historical labeler reliabilities and transferring them to the current evaluation task.
Furthermore, HERO-2 achieves additional variance reduction through control variates. In Case Study I, shown in Table \ref{tab:defect_gold10}, HERO-2 uses GPT-5-predicted labels as the control variate, as all generated contents were evaluated by the same model. In Case Study II, shown in Figure~\ref{fig:case_safety_tfidf}, HERO-2 uses TF-IDF features as the control variate. In both cases, HERO-2 reduces the SD of HERO-1 by approximately 5\%--10\%, demonstrating that the proposed variance-reduction framework is effective across different choices of control variates.

In contrast, the classical Dawid--Skene baseline does not effectively learn labeler heterogeneity in either case. As a result, it performs worse than HERO and is often closer to Majority Vote in terms of bias and RMSE.

\textit{Limitation}. Despite these promising real-world results, HERO Step 1 requires labeler reliability to transfer from historical data to the current evaluation task and Step 2 requires the shared covariates to have comparable distributions across the two periods. We view these requirements as mild, but they may not hold uniformly across model evaluation settings.

\clearpage
\bibliographystyle{jasa}
\bibliography{reference}

\clearpage
\appendix
\section{Appendix}

\section{Use cases}

Following the setup in Section 2 in the main manuscript, we provide more details on the four use cases for model evaluations:

\noindent\textit{\textbf{Use Case 1. Single Model Evaluation}.} This scenario focuses on evaluating the performance of a single model (such as defect rate), Model A. We define the outcome as the gold label for the model, 
$Y_i = Y^*_{i,\texttt{A}}.$ 
The target parameter, which represents the average performance or expected quality, is defined as $  \mu_{\texttt{A}} = \mathbb{E}[Y_i]$. 

\noindent\textit{\textbf{Use Case 2. Model Comparative Evaluation}.} This scenario independently evaluates two models, Model A and Model B, on the same prompts to calculate the absolute difference in their performance. We define the outcome as the difference in gold labels between the two models, 
$Y_i = Y^*_{i,\texttt{A}} - Y^*_{i,\texttt{B}}.$
The target parameter for the average difference in performance is expressed as $ \alpha = \mathbb{E}[Y_i]$.

\noindent\textit{\textbf{Use Case 3. Side-by-Side Model Comparison}.} This scenario involves a direct, simultaneous comparison between outputs from multiple models to establish a relative preference. We define the outcome $Y_i$ as a binary indicator of preference rather than an absolute score, such that it equals $1$ if the gold labeler prefers Model A over Model B, and $0$ otherwise. This can be formulated as $Y_i = \mathds{1}(Y^*_{i,\texttt{A}} \succ Y^*_{i,\texttt{B}}).$
The target parameter, representing the win rate or expected preference for Model A, is $\pi_{\texttt{A}\succ\texttt{B}} = \mathbb{E}[Y_i]$.

\noindent\textit{\textbf{Use Case 4. Safety Moderation Calibration}.} This scenario treats the online safety moderation system as an AI automated rater and the offline human review as the gold truth. We let $Y_i^{\texttt{AI}} \in \{0, 1\}$ represent the online model's evaluation, where $Y_i^{\texttt{AI}} = 1$ indicates the content is flagged as unsafe. We let $Y_i^* \in \{0, 1\}$ represent the offline human-labeled ground truth, where $Y_i^* = 1$ indicates an actual safety violation. We define the outcome as the intersection of the gold and AI labels, representing a true positive flag, $ Y_i = Y_i^* Y_i^{\texttt{AI}}.$ The target parameters include the actual violation percentage or prevalence base rate, $p = \mathbb{E}[Y_i^*]$. Additionally, we estimate the online model precision, which is the probability of a true violation given the model flagged it, calculated as $   \text{Precision} = \frac{\mathbb{E}[Y_i]}{\mathbb{E}[Y_i^{\texttt{AI}}]}$. 
The online model recall, which is the probability the model flagged the content given it is a true violation, is defined as $ \text{Recall} = \frac{\mathbb{E}[Y_i]}{\mathbb{E}[Y_i^*]}$.

\section{HERO Algorithmic Presentation}
Please see Algorithm~\ref{alg:hero}.

\begin{algorithm}[h]
\small
\DontPrintSemicolon
\SetAlgoInsideSkip{smallskip}
\setlength{\algomargin}{1.2em}
\caption{HERO}\label{alg:hero}
\KwIn{Historical data $\{(\tilde{Z}_{i}, \tilde{A}_{i}, \tilde{R}_i,
\tilde{Y}_i, \tilde{X}_i)\}_{i=1}^N$;\; Current data $\{(Z_i, A_i, X_i)\}_{i=1}^n$}
\KwOut{$\hat{\mu}_{\text{HERO}}$}
\medskip

\tcc{Stage A. Initialize from the gold-routed historical subset and maximize $\tilde{\mathcal{L}}(\tilde{\mu},\,\{\alpha_j,\beta_j\})$ via EM}
Let $\mathcal{G}=\{i:\tilde{R}_i=1\}$. Initialize, for $j=1,\ldots,m$,
\[
\tilde{\mu}\leftarrow |\mathcal{G}|^{-1}\sum_{i\in\mathcal{G}}\tilde{Y}_i,\qquad
\alpha_j\leftarrow
\frac{\sum_{i\in\mathcal{G}}\tilde{A}_{ij}\tilde{Y}_i\tilde{Z}_{ij}}
     {\sum_{i\in\mathcal{G}}\tilde{A}_{ij}\tilde{Y}_i},
\qquad
\beta_j\leftarrow
\frac{\sum_{i\in\mathcal{G}}\tilde{A}_{ij}(1-\tilde{Y}_i)(1-\tilde{Z}_{ij})}
     {\sum_{i\in\mathcal{G}}\tilde{A}_{ij}(1-\tilde{Y}_i)}.
\]
Clip all initial values to $[\delta,1-\delta]$ for a small $\delta>0$\;
\Repeat{\textnormal{relative change in $\tilde{\mathcal{L}}$} $<\epsilon=o(N^{-1})$}{
  \textbf{E-step:}\;
  \ForEach{$i=1,\dots,N$}{
    \lIf{$\tilde{R}_i=1$}{$\tilde{\gamma}_i\leftarrow\tilde{Y}_i$}
    \lElse{$\tilde{\gamma}_i\leftarrow
      \tilde{\mu}\,\tilde{\ell}_i(1)\big/\bigl[
      \tilde{\mu}\,\tilde{\ell}_i(1)+(1-\tilde{\mu})\,\tilde{\ell}_i(0)\bigr]$}
  }
  \textbf{M-step:}\;
  $\tilde{\mu}\leftarrow N^{-1}\textstyle\sum_{i=1}^N\tilde{\gamma}_i$\;
  \ForEach{$j=1,\dots,m$}{
  \[
  \alpha_j\leftarrow
    \frac{\sum_i\tilde{A}_{ij}\tilde{\gamma}_i\tilde{Z}_{ij}}
         {\sum_i\tilde{A}_{ij}\tilde{\gamma}_i},
  \qquad
  \beta_j\leftarrow
    \frac{\sum_i\tilde{A}_{ij}(1-\tilde{\gamma}_i)(1-\tilde{Z}_{ij})}
         {\sum_i\tilde{A}_{ij}(1-\tilde{\gamma}_i)}.
  \]
  Clip $\alpha_j,\beta_j$ to $[\delta,1-\delta]$\;
  }
  Clip $\tilde{\mu}$ to $[\delta,1-\delta]$\;
}
Store $(\hat{\alpha}_j,\hat{\beta}_j)_{j=1}^m$\;

\medskip
\tcc{Stage B. Maximize $L(\mu)$ with $(\hat{\alpha}_j,\hat{\beta}_j)$}
\ForEach{$i=1,\dots,n$}{
  $\ell_i(1)\leftarrow\prod_{j:A_{ij}=1}
    \hat{\alpha}_j^{Z_{ij}}(1-\hat{\alpha}_j)^{1-Z_{ij}}$,\quad
  $\ell_i(0)\leftarrow\prod_{j:A_{ij}=1}
    (1-\hat{\beta}_j)^{Z_{ij}}\hat{\beta}_j^{1-Z_{ij}}$\;
}
$\hat{\mu}_{\text{HERO-1}}\leftarrow
  \arg\max_{\mu\in(0,1)}\sum_{i=1}^n
  \log\bigl[\mu\,\ell_i(1)+(1-\mu)\,\ell_i(0)\bigr]$\;

\medskip
\tcc{Stage C. Variance reduction via control variates}
\lIf{no shared covariates available}{\Return $\hat{\mu}_{\text{HERO-1}}$}
$\hat{m}_X \leftarrow N^{-1}\sum_{r=1}^{N}\tilde{X}_r$,\quad
$\bar{X}\leftarrow n^{-1}\sum_{i=1}^{n}X_i$\;
\ForEach{$i=1,\dots,n$}{
  $\hat{\psi}_i\leftarrow
    \frac{\ell_i(1)-\ell_i(0)}
         {\hat{\mu}_{\text{HERO-1}}\,\ell_i(1)
          +(1-\hat{\mu}_{\text{HERO-1}})\,\ell_i(0)}$\;
}
$\hat{\gamma}\leftarrow
  \Bigl[\frac{1}{n}\sum_{i=1}^{n}(X_i-\bar{X})(X_i-\bar{X})^\top\Bigr]^{-1}
  \Bigl[\frac{1}{n}\sum_{i=1}^{n}(X_i-\bar{X})\hat{\psi}_i\Bigr]$\;
$\hat{\mu}_{\text{HERO-2}}\leftarrow
  \text{Find in }(0,1)\text{ such that}$
\[
  \frac{1}{n}\sum_{i=1}^{n}
  \frac{\ell_i(1)-\ell_i(0)}
       {\mu\,\ell_i(1)+(1-\mu)\,\ell_i(0)}
  -
  \hat{\gamma}^{\top}(\bar{X}-\hat{m}_X)
  =0
\]
\Return $\hat{\mu}_{\text{HERO-2}}$
\end{algorithm}

\section{Theoretical Guarantees of HERO}

This section establishes the theoretical guarantees of HERO. The theoretical results rely on three assumptions on the labeling mechanism, the transportability of labeler reliability between historical and current data, and the sampling structure. Lemma~1 shows that the reliability estimates $\widehat{\alpha}_j$ and $\widehat{\beta}_j$ converge at rate $O_p(N^{-1/2})$ when historical gold labels anchor the EM, and Lemma~2 shows that the same EM algorithm without a gold anchor may fail to recover the true labeler parameters. Three theorems then state the main results. Theorem~1 shows that HERO-1 attains the parametric rate $O_p(n^{-1/2})$ for the gold-label target, while the majority-vote and unanchored Dawid--Skene baselines are not guaranteed to converge to that target. Theorem~2 shows that HERO-2 reduces the asymptotic variance of HERO-1 whenever the shared covariate correlates with the score function, and preserves the bias of HERO-1 up to higher-order terms. Theorem~3 establishes asymptotic normality for both estimators, providing the basis for standard error estimation and Wald-type confidence intervals.

\begin{assumption}[Randomized labeling]
\label{assump:random-labeling}
The silver-label availability indicators $A_{ij}$ and $\tilde{A}_{ij}$ are completely randomized, satisfying
\[
P(A_{ij} = 1) \ge \pi_{\min} > 0, \quad P(\tilde{A}_{ij} = 1) \ge \pi_{\min} > 0,
\]
and the historical gold-routing indicators $\tilde{R}_i$ are completely randomized, satisfying
\[
P(\tilde{R}_i = 1) = \rho \ge \rho_{\min} > 0,
\]
where $\pi_{\min}$ and $\rho_{\min}$ are positive constants.
\end{assumption}

\begin{assumption}[Correct specification and transportable reliability]
\label{ass:transportable-reliability}
For each labeler $j$, there exist constants $(\alpha_j,\beta_j)$ satisfying $c < \alpha_j, \beta_j < 1-c$ for some $c>0$ such that the conditional reliabilities are identical in historical and current data:
\begin{align*}
\Pr(\widetilde Z_{ij}=1 \mid \widetilde Y_i=1) = \Pr(Z_{ij}=1 \mid Y_i=1) = \alpha_j,
\\
\Pr(\widetilde Z_{ij}=0 \mid \widetilde Y_i=0) = \Pr(Z_{ij}=0 \mid Y_i=0) = \beta_j,  
\end{align*}
for all instances $i$ and all labelers $j$.  
The silver labels are independent across labelers conditional on the gold label:
\begin{align}
\Pr\big(\widetilde Z_{i1} = z_1, \dots, \widetilde Z_{iJ} = z_J \;\big|\; \widetilde Y_i = y\big)
&= 
\prod_{j=1}^m \Pr(\widetilde Z_{ij} = z_j \mid \widetilde Y_i = y), \\
\Pr\big(Z_{i1} = z_1, \dots, Z_{iJ} = z_J \;\big|\; Y_i = y\big)
&=
\prod_{j=1}^m \Pr(Z_{ij} = z_j \mid Y_i = y),
\end{align}
for all instances $i$ and labelers $j=1,\dots,m$.
The parameters $\mu = \mathbb E[Y]$ and $\tilde\mu = \mathbb E[\widetilde Y]$ may differ, but are bounded away from $0$ and $1$:
\[
c < \mu, \tilde\mu < 1-c.
\]
\end{assumption}

\begin{lemma}[Convergence of HERO historical reliability estimates]\label{lem:em convergence}
Under Assumptions \ref{assump:random-labeling} and \ref{ass:transportable-reliability}, the estimators for $\alpha_j$ and $\beta_j$ in HERO algorithm satisfy
\[
\hat{\alpha}_j - \alpha_j = O_p(N^{-1/2}), \qquad 
\hat{\beta}_j - \beta_j = O_p(N^{-1/2}), \quad 
\text{for each labeler } j.
\]
\end{lemma}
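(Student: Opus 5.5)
We treat $\hat\theta=(\hat{\tilde\mu},\{\hat\alpha_j,\hat\beta_j\}_{j=1}^m)$ as an M-estimator: a (local) maximizer of the historical log-likelihood $\tilde{\mathcal L}$ in \eqref{eq:hist_loglik}, with $m$ fixed and $N\to\infty$. Under Assumptions~\ref{assump:random-labeling} and~\ref{ass:transportable-reliability}, the historical tuples $(\tilde R_i,\tilde R_i\tilde Y_i,\tilde A_{i\cdot},\tilde Z_{i\cdot})$ are i.i.d. Because $\tilde R_i$ and $\tilde A_{ij}$ are completely randomized and independent of labels, $\tilde{\mathcal L}$ is the correctly specified observed-data log-likelihood, and the ignorable routing/availability factors drop out. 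The plan has three steps: identification and consistency, then a $\sqrt N$ expansion, then a link to the EM output.

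\textbf{Step 1: identification and consistency.} Write the population criterion as
\[
M(\theta)=\rho\,\mathbb E[\log\tilde\ell(\tilde Y;\theta)]+(1-\rho)\,\mathbb E\bigl[\log(\tilde\mu\tilde\ell(1)+(1-\tilde\mu)\tilde\ell(0))\bigr].
\]
Each summand is an expected log-likelihood, so by the Kullback--Leibler inequality $\theta_0$ maximizes each. The first summand alone pins down $\theta_0$ uniquely: on gold-routed items, $P(\tilde Z_{ij}=1\mid \tilde Y=1,\tilde A_{ij}=1)=\alpha_j$ is identified for every $j$ because $P(\tilde A_{ij}=1)\ge\pi_{\min}$ and $\tilde\mu\in(c,1-c)$. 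The same holds for $\beta_j$ and for $\tilde\mu$. Hence $\theta_0$ is the unique maximizer of $M$ over the compact clipped set $\Theta_\delta=[\delta,1-\delta]^{2m+1}$, taking $\delta<c$ so that $\theta_0$ is interior. The log-likelihood terms are bounded and Lipschitz on $\Theta_\delta$, so a uniform law of large numbers holds, and the standard argmax theorem gives $\hat\theta\to_p\theta_0$.

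\textbf{Step 2: rate.} We Taylor-expand the score,
\[
0=N^{-1}\nabla\tilde{\mathcal L}(\hat\theta)=N^{-1}\nabla\tilde{\mathcal L}(\theta_0)+H(\bar\theta)(\hat\theta-\theta_0).
\]
The score at $\theta_0$ is a mean of i.i.d. mean-zero bounded vectors, hence $O_p(N^{-1/2})$. The Hessian converges uniformly to $-I(\theta_0)$. The information satisfies $I(\theta_0)\succeq\rho\, I_{\mathrm{gold}}(\theta_0)$, because the ungolded contribution is itself a Fisher information and hence positive semidefinite. $I_{\mathrm{gold}}$ is block diagonal with entries of the form
\[
\pi_j\tilde\mu/\{\alpha_j(1-\alpha_j)\},\qquad \pi_j(1-\tilde\mu)/\{\beta_j(1-\beta_j)\},\qquad 1/\{\tilde\mu(1-\tilde\mu)\},
\]
all bounded below by constants depending on $\rho_{\min},\pi_{\min},c$. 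So $I(\theta_0)$ is invertible, and $\hat\theta-\theta_0=I(\theta_0)^{-1}N^{-1}\nabla\tilde{\mathcal L}(\theta_0)+o_p(N^{-1/2})=O_p(N^{-1/2})$.

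\textbf{Step 3: linking the EM output to $\hat\theta$.} This is the main obstacle, since the algorithm returns an EM fixed point rather than a global maximizer. I would first argue that the gold-subset initializer $\theta^{(0)}$ is itself $\sqrt N$-consistent: it is a ratio of sample means with denominators bounded away from zero with high probability. So $\theta^{(0)}$ lies, with probability tending to one, in a shrinking neighbourhood of $\theta_0$. On that neighbourhood, Step 2 shows $\tilde{\mathcal L}/N$ is strictly concave, so it has a unique stationary point, namely $\hat\theta$. EM increases $\tilde{\mathcal L}$ monotonically, and its rate matrix is $I_{\mathrm{complete}}^{-1}I_{\mathrm{missing}}$, with spectral radius at most $1-\rho\,\lambda_{\min}(\cdots)<1$ because gold anchoring bounds the missing-information fraction. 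Hence EM is a contraction near $\theta_0$, its iterates stay in the neighbourhood and converge to $\hat\theta$, and clipping is inactive there.

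The stopping rule $\epsilon=o(N^{-1})$ controls the remaining numerical error. A relative change in $\tilde{\mathcal L}$ below $\epsilon$ implies, through the local quadratic behaviour, that the gradient (and hence distance to $\hat\theta$) is $o_p(N^{-1/2})$. This gives the stated $O_p(N^{-1/2})$ rates for each $\hat\alpha_j,\hat\beta_j$.
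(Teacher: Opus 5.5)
Your proposal is correct and follows the same route as the paper's proof. Both arguments establish $\sqrt N$-consistency of the gold-anchored likelihood maximizer by M-estimation with a nonsingular information matrix, use a $\sqrt N$-consistent gold-only initializer to place the EM iterates in the basin of attraction of that maximizer, and use the stopping rule $\epsilon=o(N^{-1})$ to make the optimization error $o_p(N^{-1/2})$. You in fact justify two points the paper only asserts: nonsingularity, through $I(\theta_0)\succeq\rho\,I_{\mathrm{gold}}(\theta_0)$, and the basin-of-attraction claim, through the gold-anchored bound on the EM rate matrix. One caveat: your $\tilde\mu$-entry $1/\{\tilde\mu(1-\tilde\mu)\}$ in $I_{\mathrm{gold}}$ presumes the gold-item factor $\tilde\mu^{\tilde Y_i}(1-\tilde\mu)^{1-\tilde Y_i}$, which is what the $\tilde\mu$ M-step of Algorithm~\ref{alg:hero} optimizes but which \eqref{eq:hist_loglik} as written omits.
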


\begin{proof}
    
Let $\mathcal{G} = \{ i : \tilde{R}_i = 1 \}$ denote the set of gold-routed historical instances. By the algorithm, the EM is initialized using the gold-only sample-mean estimators
\[
\tilde{\mu}^{(0)} = \frac{1}{|\mathcal{G}|} \sum_{i \in \mathcal{G}} \tilde{Y}_i, \quad
\hat{\alpha}_j^{(0)} = \frac{\sum_{i \in \mathcal{G}} \tilde{A}_{ij} \tilde{Y}_i \tilde{Z}_{ij}}{\sum_{i \in \mathcal{G}} \tilde{A}_{ij} \tilde{Y}_i}, \quad 
\hat{\beta}_j^{(0)} = \frac{\sum_{i \in \mathcal{G}} \tilde{A}_{ij} (1-\tilde{Y}_i)(1-\tilde{Z}_{ij})}{\sum_{i \in \mathcal{G}} \tilde{A}_{ij} (1-\tilde{Y}_i)}, \quad j=1,\dots,m.
\]
Under Assumption \ref{assump:random-labeling}, the gold-routing indicators $\tilde{R}_i$ are completely randomized with $P(\tilde{R}_i = 1) = \rho \ge \rho_{\min} > 0$ and the silver-label availability indicators $\tilde{A}_{ij}$ are completely randomized with $P(\tilde{A}_{ij}=1) \ge \pi_{\min}>0$, so by the law of large numbers and the delta method, for each $j$,
\[
\hat{\alpha}_j^{(0)} - \alpha_j = O_p\Big((N \rho \pi_j)^{-1/2}\Big), \qquad \hat{\beta}_j^{(0)} - \beta_j = O_p\Big((N \rho \pi_j)^{-1/2}\Big), \qquad \tilde{\mu}^{(0)} - \tilde{\mu} = O_p\Big((N \rho)^{-1/2}\Big),
\]
which implies that the initialization lies within a neighborhood of the gold-oriented population parameter $\eta_0 = (\tilde{\mu}, \alpha_1, \beta_1, \dots, \alpha_m, \beta_m)$. Under Assumption \ref{ass:transportable-reliability}, the historical data follow the Dawid--Skene model with true gold-oriented parameters $\eta_0 = (\tilde{\mu}, \alpha_1, \beta_1, \dots, \alpha_m, \beta_m)$, so that for each $i$ and $j$ the conditional distributions of $\tilde{Z}_{ij}$ given $\tilde{Y}_i$ are correctly specified and independent across labelers. The observed-data log-likelihood for the historical dataset is
\[
\tilde{\mathcal{L}}_N(\tilde{\mu}, \{\alpha_j, \beta_j\}) = \sum_{i=1}^N \Big[ \tilde{R}_i \log \tilde{\ell}_i(\tilde{Y}_i) + (1-\tilde{R}_i) \log \big( \tilde{\mu} \tilde{\ell}_i(1) + (1-\tilde{\mu}) \tilde{\ell}_i(0) \big) \Big].
\]
By construction, this likelihood is twice continuously differentiable in a neighborhood of $\eta_0$, and the Fisher information matrix
\[
\mathcal{I}_\eta = - E\Big[ \frac{\partial^2}{\partial \eta \partial \eta^\top} \log P(\tilde{Z}_i \mid \tilde{Y}_i) \Big|_{\eta = \eta_0} \Big]
\]
is nonsingular because each labeler contributes a positive amount of information and $\alpha_j, \beta_j \in (c, 1-c)$ for some $c>0$, ensuring strong identifiability. By standard results on EM for latent-variable models \citep{dempster1977maximum,wu1983convergence}, the EM map is monotone in the log-likelihood, i.e.,
\[
\tilde{\mathcal{L}}_N(\eta^{(t+1)}) \ge \tilde{\mathcal{L}}_N(\eta^{(t)}),
\]
and converges to a stationary point of $\tilde{\mathcal{L}}_N$. Since the initialization $\eta^{(0)}$ is obtained from the gold-only sample mean and satisfies $\|\eta^{(0)} - \eta_0\| = O_p((N\rho \pi_{\min})^{-1/2})$, it lies within the basin of attraction of the gold-oriented local maximizer $\widehat{\eta}_N$, and therefore the EM iterates converge to $\widehat{\eta}_N$. By the stopping criterion $\epsilon = o(N^{-1})$, the EM optimization error satisfies
\[
\|\eta^{(T)} - \widehat{\eta}_N\| = o_p(N^{-1/2}),
\]
so that it is negligible relative to the statistical variability of the maximizer. Finally, applying standard M-estimation theory for local maximizers of smooth likelihoods \citep{van2000asymptotic}, we have
\[
\widehat{\eta}_N - \eta_0 = O_p(N^{-1/2}),
\]
and combining with the EM optimization error yields
\[
\hat{\alpha}_j - \alpha_j = O_p(N^{-1/2}), \quad \hat{\beta}_j - \beta_j = O_p(N^{-1/2}), \quad \hat{\mu} - \tilde{\mu} = O_p(N^{-1/2}), \qquad j=1,\dots,m.
\]
\end{proof}

\begin{lemma}[EM without gold anchor may fail]
\label{lem:DS-no-gold-short}
Consider the Dawid--Skene EM applied to historical data or current data without using any gold labels. Then the EM does not necessarily recover the true parameters $\alpha$ and $\beta$.
\end{lemma}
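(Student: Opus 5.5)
Since the lemma only claims that the unanchored Dawid--Skene EM ``does not necessarily'' recover $(\alpha,\beta)$, it suffices to exhibit one configuration satisfying Assumptions~\ref{assump:random-labeling}--\ref{ass:transportable-reliability} in which the observed-data likelihood without gold labels has a second global maximizer, or a stationary point reachable by EM, that differs from the truth. I will use \emph{label switching}, the most transparent non-identifiability of the latent-class model. For any parameter $\eta=(\mu,\{\alpha_j,\beta_j\})$, define the swapped parameter
\[
\eta^{\star}=(1-\mu,\{1-\beta_j,\,1-\alpha_j\}).
\]
Then for every instance, $\mu\ell_i(1)+(1-\mu)\ell_i(0)$ computed under $\eta$ equals $(1-\mu)\ell^{\star}_i(0)+\mu\ell^{\star}_i(1)$ computed under $\eta^\star$. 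The reason is that $P(Z_{ij}\mid Y=1)$ under $\eta$ coincides with $P(Z_{ij}\mid Y=0)$ under $\eta^\star$, and vice versa. Hence, when $\tilde R_i=0$ for all $i$ (or on current data), the unanchored log-likelihood satisfies $\mathcal{L}(\eta)=\mathcal{L}(\eta^\star)$ identically in the data.

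The key steps are as follows. First, verify the identity above directly from \eqref{eq:hist_instance_lik}; this is a one-line computation per labeler. Second, check that $\eta^\star$ is admissible: $c<1-\beta_j,1-\alpha_j<1-c$ and $c<1-\mu<1-c$. So $\eta^\star$ lies in the same parameter space, and it differs from $\eta_0$ whenever $\alpha_j\neq 1-\beta_j$ for some $j$ (which holds, e.g., for labelers better than chance) or $\mu\neq 1/2$. Third, conclude that the population (and sample) likelihood has at least two maximizers, $\hat\eta$ and $\hat\eta^\star$, with identical likelihood values. Any maximizer-based procedure, EM included, therefore cannot single out the truth from data alone. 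The EM limit depends on its initialization: started at or near $\eta_0^\star$ (for instance from an initialization with $\alpha_j<1-\beta_j$), the monotonicity argument of \citet{wu1983convergence} shows it converges to the swapped point, giving $\hat\alpha_j\to 1-\beta_j\neq\alpha_j$.

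To strengthen the statement beyond pure symmetry, I would add a second, concrete example: $m=2$ labelers, where the observed distribution of $(Z_{i1},Z_{i2})$ has only three free cell probabilities while the parameter $(\mu,\alpha_1,\beta_1,\alpha_2,\beta_2)$ has five coordinates. A whole curve of parameters then yields the same marginal law, and EM may converge to any point on it. I would exhibit this by writing the cell probabilities and showing the Jacobian has rank at most $3$. In contrast, the gold-anchored term $\tilde R_i\log\tilde\ell_i(\tilde Y_i)$ in \eqref{eq:hist_loglik} breaks both the symmetry and the rank deficiency, which is exactly what Lemma~\ref{lem:em convergence} exploits.

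The main obstacle is making ``EM does not recover'' rigorous rather than merely ``the likelihood is not identified.'' Here I would rely on the fixed-point property: the EM map commutes with the swap $\eta\mapsto\eta^\star$, which is checked from the E- and M-step formulas in Algorithm~\ref{alg:hero} with $\tilde\gamma_i\mapsto 1-\tilde\gamma_i$. Starting from $\eta^{(0)\star}$ therefore produces the iterates $\eta^{(t)\star}$, so the EM limit from the swapped start is the swap of the correct limit, which is bounded away from $\eta_0$.
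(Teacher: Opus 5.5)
Your proof is correct, and it takes a more constructive route than the paper. The paper argues by non-concavity alone. It writes a second derivative of the unanchored log-likelihood, asserts the Hessian is not negative semidefinite, and concludes that since EM only reaches a stationary point, nothing guarantees it reaches the truth.

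That is an absence-of-guarantee argument. It never exhibits EM landing elsewhere, and non-concavity by itself would not exclude a unique stationary point at the truth. Moreover, the displayed $\alpha_j$ second derivative cannot witness non-concavity. Since $\tilde{\ell}_i(1)$ is affine in $\alpha_j$, that diagonal entry equals $-\mu^2(\partial_{\alpha_j}\tilde{\ell}_i(1))^2/(\cdot)^2\le 0$.

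Your approach fills these holes. The exact symmetry $\mathcal{L}(\eta)=\mathcal{L}(\eta^\star)$, together with swap-equivariance of the E- and M-steps ($\tilde{\gamma}_i\mapsto 1-\tilde{\gamma}_i$, symmetric clipping), gives a genuine dichotomy. Either EM from $\eta^{(0)}$ already fails, or EM from $\eta^{(0)\star}$ ends near $\eta_0^\star$, whose sensitivities are $1-\beta_j\neq\alpha_j$. It also supplies the non-concavity the paper only asserts. At the midpoint of $\eta_0$ and $\eta_0^\star$ every silver label is independent of $Y$. So with two or more informative labelers, the population log-likelihood there is strictly below its common value at the two endpoints. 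Finally, it isolates why the anchor works: $\tilde{R}_i\log\tilde{\ell}_i(\tilde{Y}_i)$ is not swap-invariant.

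The price is that pure label switching relies on Assumption~\ref{ass:transportable-reliability} not imposing $\alpha_j+\beta_j>1$. Under that convention, with at least three informative labelers, the model is generically identified. Your $m=2$ rank-deficiency example is therefore what shows failure beyond a labeling convention. The paper's informal appeal to non-concavity, by contrast, gestures at spurious stationary points in general.

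Two small slips. First, under $\eta^\star$ the mixture is $(1-\mu)\ell_i^\star(1)+\mu\ell_i^\star(0)$, not $(1-\mu)\ell_i^\star(0)+\mu\ell_i^\star(1)$ as you wrote. Second, for the claim about $(\alpha,\beta)$ the operative condition is $\alpha_j+\beta_j\neq 1$ for some $j$. The condition $\mu\neq 1/2$ alone only makes $\mu$ unrecoverable.
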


\begin{proof}
Let the observed-data likelihood for the unsupervised DS problem be
\[
\tilde{\mathcal{L}}_N(\mu, \{\alpha_j,\beta_j\}) = \sum_{i=1}^N \log \bigl( \mu \tilde{\ell}_i(1) + (1-\mu) \tilde{\ell}_i(0) \bigr),
\]
with
\[
\tilde{\ell}_i(1) = \prod_{j:\tilde{A}_{ij}=1} \alpha_j^{\tilde{Z}_{ij}} (1-\alpha_j)^{1-\tilde{Z}_{ij}}, \quad
\tilde{\ell}_i(0) = \prod_{j:\tilde{A}_{ij}=1} (1-\beta_j)^{\tilde{Z}_{ij}} \beta_j^{1-\tilde{Z}_{ij}}.
\]

Consider the second derivative of the likelihood with respect to any \(\alpha_j\):
\[
\frac{\partial^2}{\partial \alpha_j^2} \log \bigl( \mu \tilde{\ell}_i(1) + (1-\mu) \tilde{\ell}_i(0) \bigr)
= \frac{-\mu^2 \bigl(\partial_{\alpha_j} \tilde{\ell}_i(1)\bigr)^2 + \mu \bigl(\mu \partial_{\alpha_j}^2 \tilde{\ell}_i(1) + (1-\mu)\partial_{\alpha_j}^2 \tilde{\ell}_i(0)\bigr) \tilde{\ell}_i(1)}{(\mu \tilde{\ell}_i(1) + (1-\mu) \tilde{\ell}_i(0))^2},
\]
which depends on both \(\tilde{\ell}_i(1)\) and \(\tilde{\ell}_i(0)\). Since \(\tilde{\ell}_i(0)\) depends nonlinearly on \(\beta_j\) and \(\tilde{\ell}_i(1)\) depends on \(\alpha_j\), the Hessian is not negative semidefinite in general. Therefore, \(\tilde{\mathcal{L}}_N\) is not concave in $(\mu, \alpha_j, \beta_j)$.  

Because EM is a local algorithm, it is only guaranteed to converge to a stationary point of \(\tilde{\mathcal{L}}_N\) \citep{dempster1977maximum,wu1983convergence}. Hence, EM without a gold anchor does not guarantee consistent recovery of the true maxima.
\end{proof}

\begin{theorem}[Estimation rate and comparison with baselines]
\label{cor:mu-rate}
Under Assumptions \ref{assump:random-labeling} and \ref{ass:transportable-reliability}, the HERO-1 estimator satisfies
\[
\hat{\mu}_{\text{HERO-1}} - \mu = O_p(n^{-1/2}) + O_p(N^{-1/2}).
\]
In contrast, the Majority Vote estimator
and the classical Dawid--Skene EM estimator without gold anchor do not necessarily converge to the gold-oriented truth.
\end{theorem}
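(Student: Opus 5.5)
The plan is to treat $\hat{\mu}_{\text{HERO-1}}$ as a Z-estimator whose estimating function depends on the nuisance vector $\theta=(\alpha_1,\beta_1,\dots,\alpha_m,\beta_m)$. I would first handle the oracle problem with $\theta$ known, and then propagate the error of $\hat\theta$ from Lemma~\ref{lem:em convergence}.

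\emph{Oracle problem.} Write $\psi_i(\mu;\theta)=\{\ell_i(1;\theta)-\ell_i(0;\theta)\}/\{\mu\ell_i(1;\theta)+(1-\mu)\ell_i(0;\theta)\}$ and $\Psi(\mu;\theta)=\mathbb{E}[\psi_i(\mu;\theta)]$. Under Assumption~\ref{ass:transportable-reliability}, the current silver labels follow the Dawid--Skene model with true $(\mu,\theta)$, so their marginal density is $\mu\ell_i(1)+(1-\mu)\ell_i(0)$. Hence $\Psi(\mu;\theta)$ is the expected score of a correctly specified one-parameter likelihood, and $\Psi(\mu;\theta)=0$ at the truth. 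The map $\mu\mapsto\log[\mu\ell_i(1)+(1-\mu)\ell_i(0)]$ is concave, since its second derivative is $-\psi_i^2\le 0$. The sample criterion is therefore concave in $\mu$, and its maximizer is the unique zero of the score whenever the Fisher information $I(\mu)=\mathbb{E}[\psi_i(\mu;\theta)^2]$ is positive. I would get $I(\mu)>0$ from Assumption~\ref{assump:random-labeling}: with probability at least $\pi_{\min}$ some labeler is observed on an instance, and $c<\alpha_j,\beta_j<1-c$ together with $\alpha_j\neq 1-\beta_j$ (informative labelers, which I add implicitly) makes $\ell_i(1)\neq\ell_i(0)$ with positive probability. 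Standard M-estimation for concave criteria then gives $\hat\mu(\theta)-\mu=O_p(n^{-1/2})$.

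\emph{Plug-in step.} Apply a first-order expansion to the estimating equation $n^{-1}\sum_i\psi_i(\hat\mu;\hat\theta)=0$ around $(\mu,\theta)$:
\[
0=\frac1n\sum_{i=1}^n\psi_i(\mu;\theta)-I(\mu)(\hat\mu-\mu)+\dot\Psi_\theta^\top(\hat\theta-\theta)+o_p(|\hat\mu-\mu|+\|\hat\theta-\theta\|).
\]
The first term is $O_p(n^{-1/2})$ by the CLT, because $\psi_i$ is bounded: the denominator is at least $\min(\mu,1-\mu)\max(\ell_i(1),\ell_i(0))$ and $\mu$ lies in a compact subset of $(0,1)$. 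The third term is $O_p(N^{-1/2})$ by Lemma~\ref{lem:em convergence}. The current and historical samples are independent, and $\hat\theta$ is held fixed in Stage B, so I can condition on $\hat\theta$. Solving for $\hat\mu-\mu$ yields the stated rate. The step I expect to be the main obstacle is making the remainder uniform, i.e.\ showing $\sup_{\mu,\theta}|n^{-1}\sum_i\partial\psi_i-\mathbb{E}\partial\psi_i|=o_p(1)$ on a neighborhood. I would handle this with a Glivenko--Cantelli argument, using that $\psi_i$ and its derivatives are smooth and bounded once $\theta$ is restricted to $[c/2,1-c/2]^{2m}$ (the clipping in Algorithm~\ref{alg:hero} helps here). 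A preliminary consistency step, via concavity and continuity of $\Psi$ in $\theta$, is needed to localize $\hat\mu$.

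\emph{Baselines.} Counterexamples suffice for the negative claims. For majority vote with a single labeler $j$, the estimator converges to $\mu\alpha_j+(1-\mu)(1-\beta_j)$, which differs from $\mu$ unless $\mu(1-\alpha_j)=(1-\mu)(1-\beta_j)$. For example, $\alpha_j=0.9$, $\beta_j=0.6$, $\mu=0.5$ gives a limit of $0.65$, and the analogous computation works for an odd number of identical labelers. For unanchored Dawid--Skene I would invoke Lemma~\ref{lem:DS-no-gold-short}, together with the label-swapping symmetry $(\mu,\alpha_j,\beta_j)\mapsto(1-\mu,1-\beta_j,1-\alpha_j)$. This map leaves the likelihood invariant, so EM may converge to the swapped stationary point, whose $\mu$-component is $1-\mu\neq\mu$.
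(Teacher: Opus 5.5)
Your proof of the HERO-1 rate follows the paper's argument. The paper also compares $\hat\mu_{\text{HERO-1}}$ with the oracle maximizer $\mu^*$ computed at the true $(\alpha_j,\beta_j)$, and gets $\mu^*-\mu=O_p(n^{-1/2})$ from concavity and standard MLE theory. It then carries the $O_p(N^{-1/2})$ error of Lemma~\ref{lem:em convergence} through a Taylor expansion of $\ell_i(\cdot)$ and a delta-method step. Your joint expansion in $(\mu,\theta)$ is the same computation, with the boundedness of $\psi_i$, the conditioning on $\hat\theta$, and the localization step made explicit. The uniform remainder you expect to be the main obstacle is in fact immediate. Each current instance's observed data $(A_{ij},A_{ij}Z_{ij})_{j\le m}$ takes at most $3^m$ values, so sample averages of $\psi_i$ and its derivatives are finite weighted sums of continuous functions and converge uniformly on compacts.

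The condition you add implicitly is genuinely needed and is missing from Assumption~\ref{ass:transportable-reliability}: some labeler with $\alpha_j+\beta_j\neq 1$ must be observed with positive probability. Without it, $\ell_i(1)\equiv\ell_i(0)$, the current likelihood is flat in $\mu$, and the paper's appeal to standard MLE theory fails as well.

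The real difference is in the negative claims. The paper's proof of this theorem does not address majority vote at all. For Dawid--Skene it leans on Lemma~\ref{lem:DS-no-gold-short}, which establishes only non-concavity, not that a wrong limit actually occurs. Your explicit majority-vote limit $\mu\alpha_j+(1-\mu)(1-\beta_j)$ and the label-swapping invariance are more conclusive. You can make the second argument airtight instead of saying EM \emph{may} reach the swapped point. The configurations $(\mu,\alpha_j,\beta_j)$ and $(1-\mu,1-\beta_j,1-\alpha_j)$ both satisfy Assumption~\ref{ass:transportable-reliability} and induce the same law of the observed silver labels. Hence no estimator that uses silver labels only can be consistent under both when $\mu\neq 1/2$; this covers unanchored EM with any initialization rule, and majority vote too. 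This is exactly the ambiguity that the gold anchor in Lemma~\ref{lem:em convergence} resolves.
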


\begin{proof}
Decompose the HERO-1 estimation error as
\[
\hat{\mu}_{\text{HERO-1}} - \mu = (\hat{\mu}_{\text{HERO-1}} - \mu^*) + (\mu^* - \mu),
\]
where 
\[
\mu^* = \arg\max_{\mu\in(0,1)} \sum_{i=1}^n \log \bigl[ \mu \ell_i(1;\alpha_j,\beta_j) + (1-\mu) \ell_i(0;\alpha_j,\beta_j) \bigr]
\]
is the oracle MLE computed using the true labeler parameters $(\alpha_j,\beta_j)$.

By standard MLE theory \citep{van2000asymptotic}, the oracle MLE satisfies
\[
\mu^* - \mu = O_p(n^{-1/2}),
\]
because the log-likelihood for the current data is smooth and concave in $\mu$, and the $n$ instances are independent.

From Lemma~\ref{lem:em convergence}, the historical HERO EM estimates satisfy
\[
\hat{\alpha}_j - \alpha_j = O_p(N^{-1/2}), \qquad \hat{\beta}_j - \beta_j = O_p(N^{-1/2}), \quad j=1,\dots,m.
\]
Since the current-data likelihood is smooth in $(\alpha_j, \beta_j)$, a first-order Taylor expansion gives
\[
\ell_i(1;\hat{\alpha}_j) - \ell_i(1;\alpha_j) = O_p(N^{-1/2}), \qquad
\ell_i(0;\hat{\beta}_j) - \ell_i(0;\beta_j) = O_p(N^{-1/2}),
\]
uniformly in $i$. Applying the delta method to the MLE equation for $\hat{\mu}_{\text{HERO-1}}$, we obtain
\[
\hat{\mu}_{\text{HERO-1}} - \mu^* = O_p(N^{-1/2}).
\]

Combining the two contributions via the triangle inequality yields
\[
\hat{\mu}_{\text{HERO-1}} - \mu = O_p(n^{-1/2}) + O_p(N^{-1/2}).
\]
\end{proof}

\begin{assumption}[Instance sampling and moments]
\label{assump:hero2}
The current instances $\{i=1,\dots,n\}$ and historical instances $\{i=1,\dots,N\}$ are independently and identically distributed from the same population, with sample size $N \gg n$. 
The HERO-1 score contributions $\psi_i(\mu_0)$ and covariates $X_i$ have finite second moments:
\[
\mathbb{E}[\psi_i(\mu_0)^2] < \infty, \quad \mathbb{E}[X_i^2] < \infty.
\]
\end{assumption}

\begin{theorem}[Variance reduction of HERO-2 over HERO-1]
\label{thm:hero2-var-compact}
Under Assumption~\ref{assump:hero2}, the asymptotic variances  of the HERO-1 and HERO-2 estimators satisfy
\[ \mathbb V(\hat\mu_{\textsc{hero-2}}) = \mathbb V(\hat\mu_{\textsc{hero-1}}) - \mathcal I_0^{-2} \mathbb C\mathrm{ov}\{\psi_i(\mu_0),X_i\}^{\top} \Sigma_X^{-1} \mathbb C\mathrm{ov}\{X_i,\psi_i(\mu_0)\}, \] where $ \mathcal I_0=-\partial_\mu\mathbb E[\psi_i(\mu)]\big|_{\mu=\mu_0}>0$. Consequently, \[ \mathbb V(\hat\mu_{\textsc{hero-2}}) \le \mathbb V(\hat\mu_{\textsc{hero-1}}), \] with strict reduction whenever $\mathbb C\mathrm{ov}\{X_i,\psi_i(\mu_0)\}\neq 0$. Moreover, HERO-2 preserves the bias reduction of HERO-1: \[ \operatorname{Bias}(\hat\mu_{\textsc{hero-2}}) = \operatorname{Bias}(\hat\mu_{\textsc{hero-1}}) + O(N^{-1/2})+o(n^{-1/2}). \] \end{theorem}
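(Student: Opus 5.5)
The approach is to linearize both estimators around $\mu_0$ through their estimating equations and read off the influence functions. I treat the reliability estimates $(\hat\alpha_j,\hat\beta_j)$ as fixed, as in Lemma~\ref{lem:em convergence}. Their $O_p(N^{-1/2})$ error enters both estimators identically; by the Taylor argument in the proof of Theorem~\ref{cor:mu-rate} it contributes only a common $O_p(N^{-1/2})$ shift. Write $\Psi_n(\mu)=n^{-1}\sum_i\psi_i(\mu)$. Since $\mathbb E[\psi_i(\mu_0)]=0$ (score of a correctly specified likelihood), a mean-value expansion of $\Psi_n(\hat\mu_{\textsc{hero-1}})=0$ gives
\[
\hat\mu_{\textsc{hero-1}}-\mu_0=\mathcal I_0^{-1}\Psi_n(\mu_0)+o_p(n^{-1/2}).
\]
To justify this I use $\partial_\mu\Psi_n(\tilde\mu)\to-\mathcal I_0$ in probability for $\tilde\mu$ between $\hat\mu_{\textsc{hero-1}}$ and $\mu_0$. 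This follows from consistency (Theorem~\ref{cor:mu-rate}) and a uniform LLN for $\partial_\mu\psi_i(\mu)=-\{\ell_i(1)-\ell_i(0)\}^2/\{\mu\ell_i(1)+(1-\mu)\ell_i(0)\}^2$, which is bounded for $\mu$ in a compact subset of $(0,1)$. Consequently $\mathbb V(\hat\mu_{\textsc{hero-1}})=\mathcal I_0^{-2}\mathrm{Var}(\psi_i(\mu_0))/n$.

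For HERO-2 I expand the same way:
\[
\hat\mu_{\textsc{hero-2}}-\mu_0=\mathcal I_0^{-1}\bigl\{\Psi_n(\mu_0)-\hat\gamma^\top(\bar X-\hat m_X)\bigr\}+o_p(n^{-1/2}).
\]
The main obstacle is replacing $\hat\gamma$ by its population limit $\gamma_0=\Sigma_X^{-1}\mathrm{Cov}\{X_i,\psi_i(\mu_0)\}$. $\hat\gamma$ is built from the pilot scores $\hat\psi_i=\psi_i(\hat\mu_{\textsc{hero-1}})$ and uses the same sample. I handle this in two steps. First, $\hat\gamma-\gamma_0=o_p(1)$: the sample moments converge by the LLN under the second-moment condition of Assumption~\ref{assump:hero2}, and $\max_i|\hat\psi_i-\psi_i(\mu_0)|\le C|\hat\mu_{\textsc{hero-1}}-\mu_0|$ because $\partial_\mu\psi_i$ is bounded. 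Second, $\bar X-\hat m_X=O_p(n^{-1/2})$ has mean zero because both samples come from the same population. Hence $(\hat\gamma-\gamma_0)^\top(\bar X-\hat m_X)=o_p(n^{-1/2})$. This leaves the influence representation $\mathcal I_0^{-1}\{\psi_i(\mu_0)-\gamma_0^\top(X_i-m_X)\}$ plus the independent term $\mathcal I_0^{-1}\gamma_0^\top(\hat m_X-m_X)$ of order $N^{-1/2}$.

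The variance identity is then the standard regression-residual computation:
\[
\mathrm{Var}\{\psi_i-\gamma_0^\top X_i\}=\mathrm{Var}(\psi_i)-2\gamma_0^\top\mathrm{Cov}(X_i,\psi_i)+\gamma_0^\top\Sigma_X\gamma_0=\mathrm{Var}(\psi_i)-\mathrm{Cov}(\psi_i,X_i)^\top\Sigma_X^{-1}\mathrm{Cov}(X_i,\psi_i).
\]
Multiplying by $\mathcal I_0^{-2}$ gives the stated formula at the $n^{-1}$ scale. The historical term adds only $\mathcal I_0^{-2}\gamma_0^\top\Sigma_X\gamma_0/N$, which is negligible since $N\gg n$; I would state it as absorbed into the asymptotic comparison. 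Positive definiteness of $\Sigma_X^{-1}$ gives the inequality, and the reduction is strict exactly when $\mathrm{Cov}(X_i,\psi_i(\mu_0))\ne0$.

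For the bias claim, I subtract the two expansions:
\[
\hat\mu_{\textsc{hero-2}}-\hat\mu_{\textsc{hero-1}}=-\mathcal I_0^{-1}\gamma_0^\top(\bar X-\hat m_X)+o_p(n^{-1/2}).
\]
The leading term has exact mean zero because the current and historical covariates share a mean. The estimation error in the reliability parameters affects both estimators through the same plug-in $\ell_i(\cdot)$, so the only additional discrepancy is $O(N^{-1/2})$. This yields the stated bias relation.
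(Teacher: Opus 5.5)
Your proposal is correct and follows essentially the same route as the paper: linearize the control-variate-adjusted score equation around $\mu_0$, show that replacing $(\hat\gamma,\hat m_X)$ by $(\gamma^*,m_X)$ costs only higher-order terms, and apply the regression projection identity to obtain both the variance reduction and the bias relation. Your version is more careful than the paper's in three places: you use the boundedness of $\partial_\mu\psi_i=-\psi_i^2$, you give the $o_p(1)\cdot O_p(n^{-1/2})$ product argument for the pilot-score coefficient $\hat\gamma$, and you make explicit the $\mathcal I_0^{-2}\gamma_0^\top\Sigma_X\gamma_0/N$ variance term, which the paper silently drops under $N\gg n$.
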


\begin{proof}
We first consider the oracle HERO-2 estimator using the true control-variate coefficient $\gamma^*$ and the population mean $m_X$. 
Linearizing the score equation around the true parameter $\mu_0$, we have
\[
\hat\mu_{\textsc{hero-2}} - \mu_0 
= I_0^{-1} \frac{1}{n} \sum_{i=1}^n \bigl(\psi_i(\mu_0) - \gamma^{*\top} (X_i - m_X)\bigr) + R_n,
\]
where $I_0 = -\partial_\mu \mathbb E[\psi_i(\mu)]\big|_{\mu=\mu_0} > 0$ and $R_n = o_p(n^{-1/2})$ is the remainder from the Taylor expansion of the score function.  
By the standard projection property of linear regression, the variance decomposes as
\[
\operatorname{Var}\bigl[\psi_i(\mu_0) - \gamma^{*\top} (X_i - m_X)\bigr] 
= \operatorname{Var}[\psi_i(\mu_0)] 
- \operatorname{Cov}(\psi_i(\mu_0), X_i)^\top \Sigma_X^{-1} \operatorname{Cov}(X_i, \psi_i(\mu_0)) 
\le \operatorname{Var}[\psi_i(\mu_0)],
\]
with strict inequality whenever $\operatorname{Cov}(\psi_i(\mu_0), X_i) \neq 0$.  

Using Assumption~\ref{assump:hero2}, the historical mean $\hat m_X$ satisfies $\hat m_X - m_X = O_p(N^{-1/2})$ and the estimated coefficient $\hat\gamma$ satisfies $\hat\gamma - \gamma^* = o_p(1)$. These contribute only a higher-order term of 
\[
O_p(N^{-1/2}) + o_p(n^{-1/2})
\]
to the asymptotic expansion of $\hat\mu_{\textsc{hero-2}}$ since $N\gg n$, which does not affect the first-order variance reduction.  

Therefore, replacing $m_X$ by $\hat m_X$ and $\gamma^*$ by $\hat\gamma$ preserves the variance reduction, and the bias satisfies
\[
\operatorname{Bias}(\hat\mu_{\textsc{hero-2}}) 
= \operatorname{Bias}(\hat\mu_{\textsc{hero-1}}) + O(N^{-1/2}) + o(n^{-1/2}).
\]

This proves that HERO-2 strictly reduces variance relative to HERO-1 while preserving its bias properties.
\end{proof}

\begin{theorem}[Asymptotic normality of HERO-1 and HERO-2]
Suppose Assumptions~\ref{assump:random-labeling}--\ref{assump:hero2} hold, then HERO-1 satisfies
\[
\sqrt n(\hat\mu_{\textsc{hero-1}}-\mu_0)
\xrightarrow{d}
\mathcal{N}\left(
0,\,
\mathcal I_0^{-2}\mathbb V\mathrm{ar}\{\psi_i(\mu_0)\}
\right),
\]
and HERO-2 satisfies
\[
\sqrt n(\hat\mu_{\textsc{hero-2}}-\mu_0)
\rightsquigarrow
N\left(
0,\,
\mathcal I_0^{-2}
\mathbb V\mathrm{ar}\{\psi_i(\mu_0)-\gamma^{*\top}(X_i-m_X)\}
\right).
\]

\end{theorem}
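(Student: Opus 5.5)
We prove both statements with the standard Z-estimator argument: expand the estimating equation around $\mu_0$, handle the estimated nuisance pieces $(\hat\alpha_j,\hat\beta_j)$, $\hat\gamma$, $\hat m_X$ separately, and finish with the Lindeberg--Lévy CLT and Slutsky.

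\emph{HERO-1.} Let $\Psi_n(\mu;\eta)=n^{-1}\sum_i\psi_i(\mu;\eta)$, where $\eta$ collects the labeler parameters. Under Assumption~\ref{ass:transportable-reliability}, $\psi_i(\mu_0;\eta_0)$ is the score of a correctly specified mixture likelihood, so $\mathbb E[\psi_i(\mu_0;\eta_0)]=0$.

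Because $\ell_i(y)$ is a product over labelers of factors in $[c,1-c]$, the ratio $\ell_i(1)/\ell_i(0)$ is bounded above and below on a neighborhood of $\eta_0$. Hence $\psi_i$ and its derivatives in $(\mu,\eta)$ are uniformly bounded there. The map $\mu\mapsto\Psi_n(\mu;\hat\eta)$ is strictly decreasing, since $\partial_\mu\psi_i=-\psi_i^2$. Consistency of $\hat\mu_{\textsc{hero-1}}$ then follows from uniform convergence together with Lemma~\ref{lem:em convergence}.

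A mean-value expansion gives
\[
0=\Psi_n(\mu_0;\eta_0)+\partial_\eta\Psi_n(\mu_0;\bar\eta)(\hat\eta-\eta_0)-\mathcal I_n(\bar\mu)(\hat\mu_{\textsc{hero-1}}-\mu_0),
\qquad
\mathcal I_n(\bar\mu)=n^{-1}\sum_i\psi_i(\bar\mu)^2\to\mathcal I_0 .
\]
The nuisance term is $O_p(N^{-1/2})$ by Lemma~\ref{lem:em convergence}, and $\hat\eta$ is independent of the current sample. After multiplying by $\sqrt n$ it becomes $O_p(\sqrt{n/N})$, which is $o_p(1)$ because $N\gg n$; we interpret this as $n/N\to0$. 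The leading term $\sqrt n\,\Psi_n(\mu_0;\eta_0)$ is a normalized sum of i.i.d.\ mean-zero terms with finite variance (Assumption~\ref{assump:hero2}), so it converges to $\mathcal N(0,\mathrm{Var}\,\psi_i(\mu_0))$. Slutsky's theorem then gives the first display.

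\emph{HERO-2.} The estimating equation gains the term $-\hat\gamma^\top(\bar X-\hat m_X)$. We write it as
\[
\hat\gamma^\top(\bar X-\hat m_X)=\gamma^{*\top}(\bar X-m_X)+(\hat\gamma-\gamma^*)^\top(\bar X-m_X)-\hat\gamma^\top(\hat m_X-m_X).
\]
\begin{itemize}
\item $\hat\gamma\to\gamma^*=\Sigma_X^{-1}\mathrm{Cov}(X,\psi(\mu_0))$ in probability. This uses the law of large numbers, invertibility of $\Sigma_X$, and the fact that $\hat\psi_i=\psi_i(\hat\mu_{\textsc{hero-1}};\hat\eta)$ differs from $\psi_i(\mu_0;\eta_0)$ by $O_p(n^{-1/2}+N^{-1/2})$ uniformly, thanks to bounded derivatives.
\item Since $\bar X-m_X=O_p(n^{-1/2})$, the middle term is $o_p(n^{-1/2})$.
\item Since $\hat m_X-m_X=O_p(N^{-1/2})$, the last term is $o_p(n^{-1/2})$.
\end{itemize}

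Repeating the HERO-1 expansion, and noting that the adjustment does not depend on $\mu$ so the Jacobian is still $\mathcal I_0$, yields
\[
\sqrt n(\hat\mu_{\textsc{hero-2}}-\mu_0)=\mathcal I_0^{-1}n^{-1/2}\sum_i\{\psi_i(\mu_0)-\gamma^{*\top}(X_i-m_X)\}+o_p(1).
\]
The summands are i.i.d.\ and mean zero because $\mathbb E X_i=m_X$, which follows from the common population in Assumption~\ref{assump:hero2}. The CLT gives the stated limit.

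The main obstacle is controlling the historical nuisance error at the $\sqrt n$ scale. Lemma~\ref{lem:em convergence} only gives $O_p(N^{-1/2})$, so the stated variance, which contains no historical contribution, requires $n/N\to0$. I would make this explicit as the precise meaning of $N\gg n$. Otherwise an additional variance term $\frac{n}{N}\,\mathcal I_0^{-2}\,(\partial_\eta\mathbb E\psi)\,V_\eta\,(\partial_\eta\mathbb E\psi)^\top$ appears, where $V_\eta$ is the asymptotic covariance of $\sqrt N(\hat\eta-\eta_0)$; the same applies to the $\hat m_X$ term.
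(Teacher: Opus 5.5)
Your proposal is correct and takes essentially the same route as the paper's proof. Both arguments linearize the (control-variate-adjusted) estimating equation around $\mu_0$, show that the errors from $(\hat\alpha_j,\hat\beta_j)$, $\hat\gamma$, and $\hat m_X$ are $o_p(n^{-1/2})$, and conclude with the CLT and Slutsky. Your explicit reading of $N\gg n$ as $n/N\to0$ is exactly what the paper's step ``the contribution of the estimation error is $o_p(n^{-1/2})$'' silently requires. Your added details (monotonicity via $\partial_\mu\psi_i=-\psi_i^2$, and uniform boundedness of $\psi_i$ and its derivatives) fill in steps the paper only asserts.
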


\begin{proof}
For HERO-1, consider first the idealized MLE using the true labeler reliabilities $\alpha_j$ and $\beta_j$. The score equation for $\mu$ is smooth and strictly concave, so by the classical M-estimator theory, we have
\[
\sqrt n \big(\hat\mu_{\textsc{hero-1}}(\alpha,\beta) - \mu_0\big) 
\xrightarrow{d} 
\mathcal{N}\Big(0, \mathcal I_0^{-2} \operatorname{Var}\{\psi_i(\mu_0)\}\Big),
\]
where $\psi_i(\mu_0)$ is the per-instance score and $\mathcal I_0 = -\partial_\mu \mathbb{E}[\psi_i(\mu)]|_{\mu=\mu_0} >0$.  

By Lemma~\ref{lem:em convergence}, the estimated $\hat\alpha_j, \hat\beta_j$ from the historical data satisfy $\hat\alpha_j-\alpha_j = O_p(N^{-1/2})$, $\hat\beta_j-\beta_j = O_p(N^{-1/2})$. Using Assumption~\ref{assump:hero2} that $N \gg n$, the contribution of the estimation error to the $\mu$-MLE is
\[
I_0^{-1} \frac{\partial \psi_i}{\partial (\alpha,\beta)} (\hat\alpha - \alpha, \hat\beta - \beta) = o_p(n^{-1/2}),
\]
so $\sqrt n (\hat\mu_{\textsc{hero-1}} - \mu_0)$ has the same asymptotic distribution as the idealized MLE.  

For HERO-2, consider first the oracle estimator using the true control-variate coefficient $\gamma^*$ and population mean $m_X$. Linearizing the adjusted score equation, we have
\[
\sqrt n \big(\hat\mu_{\textsc{hero-2}}(\gamma^*, m_X) - \mu_0\big)
= \mathcal I_0^{-1} \frac{1}{\sqrt n} \sum_{i=1}^n \big(\psi_i(\mu_0) - \gamma^{*\top} (X_i - m_X)\big) + o_p(1),
\]
which immediately yields the asymptotic normality with variance 
\(\mathcal I_0^{-2} \operatorname{Var}[\psi_i(\mu_0)-\gamma^{*\top}(X_i - m_X)]\).  

Finally, by the argument in the proof of Theorem~\ref{thm:hero2-var-compact}, replacing $\gamma^*$ and $m_X$ by their estimators $\hat\gamma$ and $\hat m_X$ contributes only $o_p(1)$ to the $\sqrt n$-scale expansion. Hence the asymptotic distribution of HERO-2 is unchanged. This completes the proof.
\end{proof}

\section{Additional simulation details and results}

In this section, we shall provide additional simulation setup details in Section \ref{subsec:simulation setup additional details} and additional simulation results in Section \ref{subsec:Additional simulation results}.

\subsection{Simulation setup additional details.}
\label{subsec:simulation setup additional details}

In Table \ref{tab:heterogeneity-levels}, we shall provide the details of how we define heterogeneity level in Experiment 2. In Table \ref{tab:sim-params}, we summarize the simulation parameters used in the main manuscript.  In what follows, we shall describe the data generating process. In each Monte Carlo replication, we generate two datasets: a large historical evaluation sample of size $N$ and a smaller current evaluation sample of size $n$. The two samples share the same prompt-feature distribution and the same pool of silver labelers, but gold labels are observed only for a randomized subset of historical instances. This construction matches the intended deployment setting of HERO, where historical gold labels are used to learn persistent labeler reliability and historical covariates provide a high-precision control-variate anchor. For each instance $i$, we draw a scalar covariate $
    X_i \sim \mathcal{N}(0,1),
$
and generate the latent gold label from the logistic model
$
    Y_i \mid X_i \sim \mathrm{Bernoulli}\{p(X_i)\},$
$ p(x) = \sigma(\beta_0+\beta_1 x)$, $\sigma(t)=\frac{1}{1+\exp(-t)} .$
The estimand is therefore $
    \mu_0 := \mathbb{E}_{X}\{\sigma(\beta_0+\beta_1 X)\}.$
We set $\beta_0=-0.2$. The slope $\beta_1$ controls the correlation between the auxiliary covariate $X_i$ and the latent label $Y_i$: larger $\beta_1$ makes $X_i$ more informative and therefore increases the potential gain from the HERO-2 control-variate adjustment.

Each silver labeler $j\in\{1,\ldots,J\}$ has a sensitivity $\alpha_j$ and specificity $\beta_j$,
$
    \alpha_j = \Pr(Z_{ij}=1\mid Y_i=1),
    \beta_j = \Pr(Z_{ij}=0\mid Y_i=0).$
To induce persistent labeler heterogeneity, we draw
$
    \alpha_j \overset{\mathrm{iid}}{\sim}
    \mathrm{Beta}\{k\mu_\alpha,k(1-\mu_\alpha)\},
    \beta_j \overset{\mathrm{iid}}{\sim}
    \mathrm{Beta}\{k\mu_\beta,k(1-\mu_\beta)\}.
$
Throughout Experiment~1 we use $\mu_\alpha=0.85$, $\mu_\beta=0.55$, and $k=8$. Thus, the average silver labeler has high sensitivity but only moderate specificity, creating an asymmetric error pattern under which uncalibrated majority vote is expected to be biased. The concentration parameter $k$ controls across-labeler heterogeneity: larger $k$ yields a more homogeneous labeler pool around the same mean reliability, whereas smaller $k$ yields more dispersed labeler-specific error rates. Each labeler $j$ annotates instance $i$ with probability $
    A_{ij}\sim \mathrm{Bernoulli}(\pi_j),\ 
    \pi_j\sim \mathrm{Uniform}(0.2,0.6),$
independently across instances conditional on $\pi_j$. If $A_{ij}=1$, the observed silver label is generated as
$Z_{ij}\mid Y_i =
    \mathrm{Bernoulli}(\alpha_j),$ if $ Y_i=1$, and $  Z_{ij}\mid Y_i =
    \mathrm{Bernoulli}(1-\beta_j)$, if $Y_i=0$.
Instances with no observed silver labels are excluded from all estimators; this event is rare in the reported settings and affects all methods equally. The historical and current datasets are generated from the same mechanism, with independent instances but a common labeler pool. For the historical sample, a randomized routing indicator
$
    R_i \sim \mathrm{Bernoulli}(\rho)
$
determines whether the latent gold label is revealed. The parameter $\rho$ is the historical gold-label rate. For the current sample, $R_i=0$ for all instances, so no current gold labels are observed.

\begin{table}[ht]
\centering
\caption{Heterogeneity levels used in Experiment~2. Smaller $k$ corresponds to greater across-labeler dispersion while preserving the same mean sensitivity and specificity.}
\label{tab:heterogeneity-levels}
\begin{tabular}{lccccc}
\toprule
Category & $k$ & $\mathbb{E}[\alpha_j]$ & $\mathrm{SD}(\alpha_j)$
& $\mathbb{E}[\beta_j]$ & $\mathrm{SD}(\beta_j)$ \\
\midrule
Very low & $150$ & $0.85$ & $0.029$ & $0.55$ & $0.040$ \\
Low      & $30$  & $0.85$ & $0.064$ & $0.55$ & $0.089$ \\
Moderate & $12$  & $0.85$ & $0.099$ & $0.55$ & $0.138$ \\
High     & $6$   & $0.85$ & $0.135$ & $0.55$ & $0.188$ \\
Extreme  & $3$   & $0.85$ & $0.179$ & $0.55$ & $0.249$ \\
\bottomrule
\end{tabular}
\end{table}

\begin{table}[ht]
\centering
\caption{Simulation parameters.}{Simulation parameters used in the Monte Carlo experiments.}
\label{tab:sim-params}
\begin{tabular}{lll}
\toprule
Parameter & Value & Description \\
\midrule
$n$ & $150$ & Current instances in Experiment~1 Setting~(i) \\
$n$ & $300$ & Current instances in Experiment~1 Setting~(ii) and Experiment~2 \\
$N$ & $400$ & Historical instances in Experiment~1 Setting~(i) \\
$N$ & $800$ & Historical instances in Experiment~1 Setting~(ii) and Experiment~2 \\
$J$ & $5$ & Number of labelers in Experiment~1 Setting~(i) \\
$J$ & $10$ & Number of labelers in Experiment~1 Setting~(ii) and Experiment~2 \\
$\beta_0$ & $-0.2$ & Logistic intercept \\
$\beta_1$ & $1.0$ & Logistic slope in Experiment~1 Setting~(i) \\
$\beta_1$ & $3.0$ & Logistic slope in Experiment~1 Setting~(ii) and Experiment~2 \\
$\mu_\alpha$ & $0.85$ & Mean silver-labeler sensitivity \\
$\mu_\beta$ & $0.55$ & Mean silver-labeler specificity \\
$k$ & $8$ & Baseline Beta concentration in Experiment~1 \\
$\pi_j$ & $\mathrm{Uniform}(0.2,0.6)$ & Labeler-specific annotation probability \\
$\rho$ & $\{0.01,0.02,\ldots,0.10\}$ & Historical gold-label rate in Experiment~1 Setting~(i) \\
$\rho$ & $\{0.05,0.06,\ldots,0.10\}$ & Historical gold-label rate in Experiment~1 Setting~(ii) \\
$k$ & $\{150,30,12,6,3\}$ & Beta concentration levels in Experiment~2 \\
$\rho$ & $\{0.05,0.10\}$ & Historical gold-label rates in Experiment~2 \\
$B$ & $200$ & Monte Carlo replications \\
\bottomrule
\end{tabular}
\end{table}

\subsection{Additional simulation results}\label{subsec:Additional simulation results}

In what follows, we show another setting of Experiment 1. In this setting, we use a small current sample and a moderate covariate signal: $n=150$, $N=400$, $J=5$, and $\beta_1=1$. We vary the historical gold-label rate over $\rho\in\{0.01,0.02,\ldots,0.10\}$.

\begin{figure}[ht]
    \centering  \includegraphics[width=\linewidth]{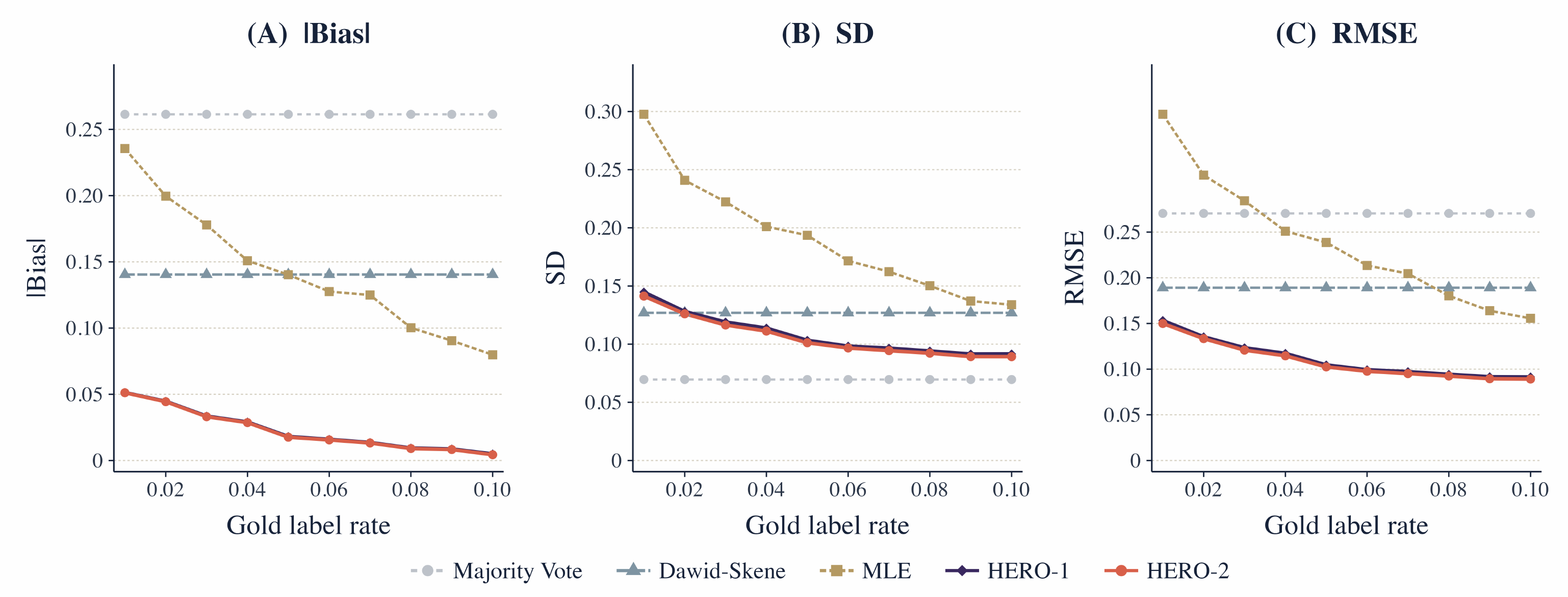}
    \caption{Experiment 1 Setting (ii): Comparison of (A) absolute bias, (B) variance, and (C) MSE with respect to various gold label rate. $\beta_1=1$, $J=5$, $n=150$, $N=400$
    {Effect of historical gold-label rate in Setting~(i). Panels report (A) absolute bias, (B) Monte Carlo standard deviation, and (C) RMSE for estimating $\mu_0=\mathbb{E}[Y_i]$. The current sample is small and the covariate signal is moderate: $\beta_1=1$, $J=5$, $n=150$, and $N=400$.}}
    \label{fig:sim-gold-small}
\end{figure}

\section{Additional details on real-world case studies}
\begin{enumerate}
    \item \textbf{Asset defect detection dataset.} 
    This dataset is designed to evaluate generation defects in 3D generation models. It contains data from two versions of the model: the historical dataset consists of outputs generated by the older model version, while the current dataset consists of outputs generated by the newer model version. Both model versions generate 3D assets from the same set of 532 prompts. For each prompt--asset pair, silver labelers evaluate whether the generated asset contains a specific type of defect. The silver labels are binary, with Yes/No responses encoded as 1/0. There are 10 silver labelers in total, and each prompt--asset pair is evaluated by three randomly selected, distinct silver labelers. In addition, a completely random 10\% subset of prompt--asset pairs is evaluated by a golden labeler, who is a well-trained expert. Most prompt--asset pairs are also evaluated by an AI labeler, GPT-5, which is prompted to provide the same binary defect label.

    \textit{Shared control variate}. For the HERO-2 method, we use the AI labels generated by GPT-5 as control variates. We repeat this procedure 100 times and report the absolute bias, standard deviation, and root mean squared error (RMSE) based on the resampled results, following the similar evaluation protocol as in the Simulation Study.

    \item \textbf{Safety evaluation dataset.} 
    This dataset contains both a historical dataset and a current dataset for evaluating the safety of a 3D generation model with respect to policy violation categories. The historical dataset corresponds to evaluations along one safety dimension, whereas the current dataset corresponds to evaluations along another safety dimension. The model generates 3D assets from 2,046 prompts. For each prompt--asset pair, silver labelers provide binary safety labels along the corresponding safety dimension. There are 11 silver labelers in total, and each prompt--asset pair is evaluated by a randomly selected set of 4--7 distinct silver labelers. In addition, a completely random 80\% subset of prompt--asset pairs is evaluated by a golden labeler, who is a well-trained expert, to benchmark the performance of the silver labelers.
    
    \textit{Shared control variate.} For HERO-2,  we use TF-IDF features extracted from the prompts as control variates. Since TF-IDF  (Term Frequency-Inverse Document Frequency) features are typically high-dimensional relative to the sample size, we apply a ridge penalty when estimating the coefficient $\gamma$ in the score-based control-variate adjustment. As in the previous case study, we report the absolute bias, standard deviation, and RMSE for HERO and the baseline methods.
\end{enumerate}

\section{Comparison with existing literature}
In this section, we compare HERO with three related strands of work: learning from noisy annotators, variance reduction techniques, and generative AI evaluation.

Learning from crowds and noisy annotators has a long history in statistics and machine learning. The classical Dawid–Skene (DS) model \cite{DawidSkene1979} introduced an EM-based approach to estimate individual annotator error rates from noisy labels. Subsequent work extended this idea to more complex scenarios, including latent variable models for multi-dimensional annotation \citep{welinder2010multidimensional}, probabilistic truth inference \citep{whitehill2009whose}, and crowdsourced aggregation with worker reliability \citep{raykar2010learning,sheng2008get}. Most of these studies assume that all labelers are noisy, and they do not exploit the existence of a small set of high-quality gold labels for calibration, leaving a gap for methods that systematically combine historical gold-labeled subset with current noisy annotations.

Variance reduction has been studied extensively in statistics and experimental design. Classical tools include control variates and regression adjustment, where auxiliary information is used to reduce estimation variance without bias \citep{owen2013monte,lin2013agnostic}. In applied fields, \cite{deng2013improving} uses pre‑experiment measurements to improve sensitivity in controlled experiments, and Monte Carlo methods with control variates have a rich theoretical literature \citep{glasserman2004monte}. These techniques reduce variance by exploiting correlated auxiliary variables, but, to our knowledge, they have not been systematically adapted to the noisy‑label setting.

Evaluation of generative AI raises additional challenges. Human preference judgments and pairwise comparison metrics have been used in practice to assess LLM quality \citep{zheng2023judging,chiang2024chatbot}, and broad benchmarking frameworks such as HELM \citep{liang2022holistic} aggregate diverse evaluation metrics. Safety and robustness evaluations, such as toxicity benchmarks, further illustrate the complexity of human annotations \citep{gehman2020realtoxicityprompts,borkan2019nuanced}. However, there is not yet a fully statistical framework that jointly accounts for annotator heterogeneity, leverages historical gold supervision, and reduces variance in large‑scale AI model evaluation.

\end{document}